\newtheorem{definition}{Definition}
\newtheorem{remark}{Remark}
\newtheorem{theorem}{Theorem}
\newtheorem{claim}{Claim}
\newtheorem{lemma}{Lemma}
\definecolor{auburn}{rgb}{0.43, 0.21, 0.1}
\title{On the Stability Region of Intermittent Interference Networks}
\author{Sajjad~Nassirpour
        and~Alireza~Vahid
        \thanks{The authors are with the Department of Electrical Engineering at the University of Colorado, Denver, USA. Email: {\sffamily sajjad.nassirpour@ucdenver.edu} and {\sffamily alireza.vahid@ucdenver.edu}.}
\thanks{The preliminary results of this work were presented at the 2020 IEEE Computing and Communication Workshop and Conference (CCWC)~\cite{nassirpour2020throughput}.
}
}
\begin{document}
\maketitle
\begin{abstract}
Recent information-theoretic studies have resulted in several interference management (IM) techniques that promise significant capacity improvements over interference avoidance techniques.
However, in practice, the stable throughput region is a more relevant metric compared to the capacity region.
In this work, we focus on the stable throughput region of a two-pair intermittent interference 
 network with distributed transmitters and propose a queue-based transmission protocol in different regimes to handle the data between queues.
In this context, we translate physical-layer IM protocols to
accommodate stochastic message arrivals. 
To evaluate our proposed techniques, we compare the stable throughput region to the capacity region and show, through simulations, that the stable throughput region matches the capacity region, when the latter is known.
We show that in order to achieve the optimal stable throughput region, new ingredients are needed when compared to prior results. We quantify the trade-off between encoding/decoding complexity of the proposed scheme (in terms of number of required algebraic operations), and the  achievable rates. Finally, we study the lifetime of messages (i.e. the duration from arrival to successful delivery) vis-a-vis the total communication time, and we observe that the average lifetime scales as the square root of the total communication time.
\end{abstract}
\begin{IEEEkeywords}
Stable throughput, distributed interference management, intermittent connectivity, delayed feedback. 
\end{IEEEkeywords}


\section{Introduction}
\label{section:intro}
As wireless networks grow in size, the number of simultaneous transmissions increases resulting in higher chance of signal interference. In fact, interference has become the main bottleneck for network throughput improvements in wireless systems. Therefore, Interference Management (IM) techniques play a central role in enhancing throughput rates, and in recent years, we have seen a variety of IM techniques in the Information Theory literature. Of particular interest are networks with intermittent connectivity as they model last-mile communications, shadowing in mmWave, fog-aided systems, bursty and packet transmission.


In this work, we focus on the scenario in which messages arrive stochastically at the transmitters, and thus, introduce new challenges in interference management. Stable throughput region of multi packet reception channel model under ALOHA mechanism is described in~\cite{mergen2005stability,luo2006throughput}. Authors in~\cite{cadambe2008duality} characterize the stable throughput region of the broadcast channel (BC) and the multiple-access channel (MAC) with backlogged traffic. The stable throughput region of the two-user BC with feedback is characterized in~\cite{sagduyu2012capacity,listability}. Authors in~\cite{wang2012capacity} show that the complexity of analyzing the capacity and the stability regions of $K$-user ($K \geq 4$) BCs grows rapidly with the number of users, and tracking all interfering signals becomes a daunting task. These results, for the most part, consider a central transmitter that has access to all the messages in the network. 


We consider a distributed intermittent interference network with two transmitter-receiver pairs. In this setting, transmitters do not exchange any messages and thus, we ought to make decentralized and distributed decisions for interference management. For this problem, we study the stable throughput region. We show, through simulations, that the stable throughput region matches the capacity region when the latter is known. 
Motivated by recent IM techniques for networks with delayed channel state information at the transmitters (CSIT) which is a more suitable model for dynamic networks (\emph{e.g.}~\cite{maddah2012completely,viswanathan1999capacity,vaze2011degrees}), we consider wireless networks with low feedback bandwidth where only a couple of feedback bits are available per transmission. We note that each data packet in the forward channel may contain thousands of bits and thus, feedback overhead is negligible. In intermittent networks, transmitters send out data packets and receive delayed ACK/NACK messages (\emph{i.e.} indicating successful delivery or failure).


There are several communication protocols that achieve the information-theoretic bounds on throughput rates of networks with ACK/NACK feedback. However, these results are again mostly limited to centralized transmitters. XOR combination technique, which incorporates feedback from receivers is introduced in~\cite{jolfaei1993new}, to reduce the retransmission attempts and improve throughput rate under repetitive Automatic Repeat reQuest (ARQ) strategy in multi-users BCs. Authors in~\cite{georgiadis2009broadcast} show that by employing the XOR operation and random coding mechanism, the upper bound of the capacity region in BC model is achievable. A threshold-based scheduling procedure is defined in~\cite{reddy2012distributed} which describes the policies of threshold value determination to handle the packets and maximize the throughput rate in $K$-user interference wireless network. 

\begin{figure}[!ht]
  \centering

\tikzset{every picture/.style={line width=0.75pt}} 

\begin{tikzpicture}[x=0.75pt,y=0.75pt,yscale=-1,xscale=1]

\draw    (119,124) -- (230.17,124.33) ;
\draw    (119,143) -- (230.17,143.33) ;
\draw    (230.17,124.33) -- (230.17,143.33) ;
\draw    (210.17,125.33) -- (210.17,142.33) ;
\draw    (190.17,125.33) -- (190.17,142.33) ;
\draw    (170.17,125.33) -- (170.17,142.33) ;
\draw    (150.17,125.33) -- (150.17,142.33) ;
\draw    (130.17,125.33) -- (130.17,142.33) ;
\draw    (119,214) -- (230.17,214.33) ;
\draw    (119,233) -- (230.17,233.33) ;
\draw    (230.17,214.33) -- (230.17,233.33) ;
\draw    (210.17,215.33) -- (210.17,232.33) ;
\draw    (190.17,215.33) -- (190.17,232.33) ;
\draw    (170.17,215.33) -- (170.17,232.33) ;
\draw    (150.17,215.33) -- (150.17,232.33) ;
\draw    (130.17,215.33) -- (130.17,232.33) ;
\draw  [fill={rgb, 255:red, 155; green, 155; blue, 155 }  ,fill opacity=1 ] (190.17,124.1) -- (210.17,124.1) -- (210.17,143.1) -- (190.17,143.1) -- cycle ;
\draw  [fill={rgb, 255:red, 155; green, 155; blue, 155 }  ,fill opacity=1 ] (150.17,124.33) -- (170.17,124.33) -- (170.17,143.33) -- (150.17,143.33) -- cycle ;
\draw  [fill={rgb, 255:red, 155; green, 155; blue, 155 }  ,fill opacity=1 ] (210.17,124.33) -- (230.17,124.33) -- (230.17,143.33) -- (210.17,143.33) -- cycle ;
\draw  [fill={rgb, 255:red, 155; green, 155; blue, 155 }  ,fill opacity=1 ] (130.17,214.33) -- (150.17,214.33) -- (150.17,233.33) -- (130.17,233.33) -- cycle ;
\draw  [fill={rgb, 255:red, 155; green, 155; blue, 155 }  ,fill opacity=1 ] (190.17,214.33) -- (210.17,214.33) -- (210.17,233.33) -- (190.17,233.33) -- cycle ;
\draw    (240.57,134) -- (257.55,134.21) -- (278.57,134.46) ;
\draw [shift={(281.57,134.5)}, rotate = 180.7] [fill={rgb, 255:red, 0; green, 0; blue, 0 }  ][line width=0.08]  [draw opacity=0] (8.93,-4.29) -- (0,0) -- (8.93,4.29) -- cycle    ;
\draw    (238.57,224) -- (255.55,224.21) -- (276.57,224.46) ;
\draw [shift={(279.57,224.5)}, rotate = 180.7] [fill={rgb, 255:red, 0; green, 0; blue, 0 }  ][line width=0.08]  [draw opacity=0] (8.93,-4.29) -- (0,0) -- (8.93,4.29) -- cycle    ;
\draw   (287.5,123.71) -- (337.57,123.71) -- (337.57,153.71) -- (287.5,153.71) -- cycle ;
\draw    (338.57,140.5) -- (347.57,140.5) ;
\draw    (347.57,140.5) -- (347.57,120.5) ;
\draw    (347.57,120.5) -- (357.57,110.5) ;
\draw    (338.57,110.5) -- (357.57,110.5) ;
\draw    (347.57,120.5) -- (338.57,110.5) ;
\draw   (288.5,213.71) -- (338.57,213.71) -- (338.57,244.71) -- (288.5,244.71) -- cycle ;
\draw    (339.57,229.5) -- (348.57,229.5) ;
\draw    (348.57,229.5) -- (348.57,209.5) ;
\draw    (348.57,209.5) -- (358.57,199.5) ;
\draw    (339.57,199.5) -- (358.57,199.5) ;
\draw    (348.57,209.5) -- (339.57,199.5) ;
\draw   (498.5,123.71) -- (548.57,123.71) -- (548.57,154.71) -- (498.5,154.71) -- cycle ;
\draw    (488.57,140.5) -- (497.57,140.5) ;
\draw    (488.57,140.5) -- (488.57,120.5) ;
\draw    (488.57,120.5) -- (498.57,110.5) ;
\draw    (479.57,110.5) -- (498.57,110.5) ;
\draw    (488.57,120.5) -- (479.57,110.5) ;
\draw   (497.5,214.71) -- (547.57,214.71) -- (547.57,244.71) -- (497.5,244.71) -- cycle ;
\draw    (487.57,230.5) -- (496.57,230.5) ;
\draw    (487.57,230.5) -- (487.57,210.5) ;
\draw    (487.57,210.5) -- (497.57,200.5) ;
\draw    (478.57,200.5) -- (497.57,200.5) ;
\draw    (487.57,210.5) -- (478.57,200.5) ;
\draw    (354.57,118.5) -- (364.5,118.71) -- (371.55,118.71) -- (437.5,118.71) ;
\draw [shift={(440.5,118.71)}, rotate = 180.01] [fill={rgb, 255:red, 0; green, 0; blue, 0 }  ][line width=0.08]  [draw opacity=0] (8.93,-4.29) -- (0,0) -- (8.93,4.29) -- cycle    ;
\draw    (355.57,209.5) -- (365.5,209.71) -- (372.55,209.71) -- (438.5,209.71) ;
\draw [shift={(441.5,209.71)}, rotate = 180.01] [fill={rgb, 255:red, 0; green, 0; blue, 0 }  ][line width=0.08]  [draw opacity=0] (8.93,-4.29) -- (0,0) -- (8.93,4.29) -- cycle    ;
\draw    (357.5,205.71) -- (430.19,136.49) -- (439.33,127.78) ;
\draw [shift={(441.5,125.71)}, rotate = 496.4] [fill={rgb, 255:red, 0; green, 0; blue, 0 }  ][line width=0.08]  [draw opacity=0] (8.93,-4.29) -- (0,0) -- (8.93,4.29) -- cycle    ;
\draw    (354.57,123.5) -- (437.31,200.67) ;
\draw [shift={(439.5,202.71)}, rotate = 223.01] [fill={rgb, 255:red, 0; green, 0; blue, 0 }  ][line width=0.08]  [draw opacity=0] (8.93,-4.29) -- (0,0) -- (8.93,4.29) -- cycle    ;
\draw    (463,120) -- (464.51,120.09) -- (479.5,119.78) ;
\draw [shift={(482.5,119.71)}, rotate = 538.8199999999999] [fill={rgb, 255:red, 0; green, 0; blue, 0 }  ][line width=0.08]  [draw opacity=0] (8.93,-4.29) -- (0,0) -- (8.93,4.29) -- cycle    ;
\draw    (463,211) -- (479.5,210.75) ;
\draw [shift={(482.5,210.71)}, rotate = 539.29] [fill={rgb, 255:red, 0; green, 0; blue, 0 }  ][line width=0.08]  [draw opacity=0] (8.93,-4.29) -- (0,0) -- (8.93,4.29) -- cycle    ;

\draw (92,125) node [anchor=north west][inner sep=0.75pt]    {$\lambda _{1}$};
\draw (92,215) node [anchor=north west][inner sep=0.75pt]    {$\lambda _{2}$};
\draw (300,131) node [anchor=north west][inner sep=0.75pt]    {${\sf Tx}_1$};
\draw (300,221) node [anchor=north west][inner sep=0.75pt]    {${\sf Tx}_2$};
\draw (510,131) node [anchor=north west][inner sep=0.75pt]    {${\sf Rx}_1$};
\draw (510,221) node [anchor=north west][inner sep=0.75pt]    {${\sf Rx}_2$};
\draw (440,110) node [anchor=north west][inner sep=0.75pt]  [font=\LARGE]  {$\oplus $};
\draw (440,200) node [anchor=north west][inner sep=0.75pt]  [font=\LARGE]  {$\oplus $};
\draw (375,100) node [anchor=north west][inner sep=0.75pt]  [font=\small]  {$S_{11}[t]$};
\draw (375,215) node [anchor=north west][inner sep=0.75pt]  [font=\small]  {$S_{22}[t]$};
\draw (370,117) node [anchor=north west][inner sep=0.75pt]  [font=\small,rotate=-41.74]  {$S_{12}[t]$};
\draw (360,181.53) node [anchor=north west][inner sep=0.75pt]  [font=\small,rotate=-317.57]  {$S_{21}[t]$};
\draw (475,90) node [anchor=north west][inner sep=0.75pt]    {$Y_{1}[t]$};
\draw (475,180) node [anchor=north west][inner sep=0.75pt]    {$Y_{2}[t]$};

\draw (325,90) node [anchor=north west][inner sep=0.75pt]    {$X_{1}[t]$};
\draw (325,180) node [anchor=north west][inner sep=0.75pt]    {$X_{2}[t]$};

\draw (255.5,88) node [anchor=north west][inner sep=0.75pt]  [color={rgb, 255:red, 208; green, 2; blue, 27 }  ,opacity=1 ]  {$S^{t-1}$};
\draw (255.5,257) node [anchor=north west][inner sep=0.75pt]  [color={rgb, 255:red, 208; green, 2; blue, 27 }  ,opacity=1 ]  {$S^{t-1}$};

\draw [color={rgb, 255:red, 208; green, 2; blue, 27 }  ,draw opacity=1 ]   (320.5,100.71) -- (320.5,117) ;
\draw [shift={(320.5,120)}, rotate = 270] [fill={rgb, 255:red, 208; green, 2; blue, 27 }  ,fill opacity=1 ][line width=0.08]  [draw opacity=0] (8.93,-4.29) -- (0,0) -- (8.93,4.29) -- cycle    ;
\draw [color={rgb, 255:red, 208; green, 2; blue, 27 }  ,draw opacity=1 ]   (290.5,100.71) -- (320.5,100.71) ;
\draw [color={rgb, 255:red, 208; green, 2; blue, 27 }  ,draw opacity=1 ]   (320.5,268.71) -- (320.5,253.71) ;
\draw [shift={(320.5,250.71)}, rotate = 450] [fill={rgb, 255:red, 208; green, 2; blue, 27 }  ,fill opacity=1 ][line width=0.08]  [draw opacity=0] (8.93,-4.29) -- (0,0) -- (8.93,4.29) -- cycle    ;
\draw [color={rgb, 255:red, 208; green, 2; blue, 27 }  ,draw opacity=1 ]   (290.5,268.71) -- (320.5,268.71) ;
\end{tikzpicture}
  \caption{A two-user binary intermittent interference network with stochastic data arrivals.}
  \label{Fig:network_model}
\end{figure}
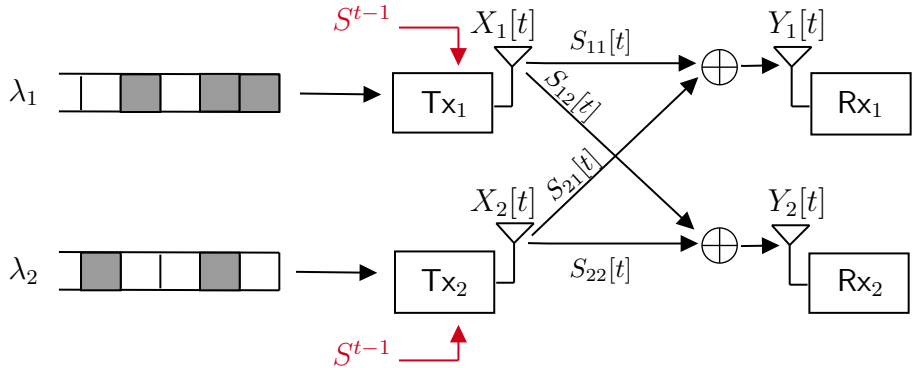


\noindent\textbf{Related Work}: The intermittent interference channel abstraction that we use in this work, was first introduced in~\cite{vahid2011interference}, and the capacity region of the two-user setting is investigated under various assumptions~\cite{vahid2014communication,vahid2014capacity,vahid2012binary,vahid2015impact,vahid2016does,vahid2016two,vahid2017binary,vahid2018arq}. 
This model consists of two transmitter-receiver pairs, channel connectivity is governed by binary coefficients $S_{ij}[t]$'s at time $t$, and we assume these coefficients are available to the transmitters with unit delay. We study the stable throughput region of this problem as opposed to the capacity region analysis of~\cite{vahid2014capacity,vahid2014communication}. 
 
In~\cite{pan2016stability}, the authors study a similar problem, deploy similar coding ideas to~\cite{vahid2014capacity}, and propose a modified version to accommodate stochastic packet arrivals. Unfortunately, there are a number of errors in coding strategy in that paper which we will discuss in Remark~\ref{remark:fault_example} of Section~\ref{Subsection:DO}. In~\cite{papas2019adhoc}, the authors derive stable throughput region but ignore the potential benefits of coding opportunities across transmitters, and thus, the network is treated as interfering broadcast channels for which the achievable rates are inferior to the interference channel.

\noindent\textbf{Contributions}: Our main contributions are as follows:
\begin{enumerate}

\item We extend the outer-bounds in~\cite{vahid2014capacity} to non-homogeneous channels; 




\item We design a new queue-based transmission protocol suited for stochastic arrivals; 

\item We numerically evaluate the stable throughput region and compare our results to the information-theoretic outer-bounds. We observe that the stable throughput region matches the known capacity region\footnote{In our numerical evaluation, we sweep different erasure probabilities in $0.02$ increments.} under specific channel conditions;

\item We quantify the computational overhead of coding (in terms of number of algebraic operations), and we analyze the lifetime (or delivery delay) under the proposed protocol.
\end{enumerate}

The rest of the paper is organized as follows. Section~\ref{Section:Problem} formulates our problem. In Section~\ref{sect:main_result}, we explain our main results through two theorems. We present our queue-based transmission protocol in Section~\ref{Section:Protocol}, and Section~\ref{sect:non-homogeneous} is dedicated to the transmission protocol in non-homogeneous settings. Section~\ref{Section:Simulation} includes our simulation results and shows by simulation that our proposed transmission protocol achieves the entire capacity region under specific cases. In Section~\ref{Section:Conclusion}, we conclude the discussion about queue-based algorithm.

\section{Problem Setting}
\label{Section:Problem}

In this section, we define the channel model and the stable throughput region. 
\subsection{Channel model}
\label{subsec:channel_model}
We consider a binary channel model for wireless networks introduced in~\cite{vahid2014capacity} which includes two transmitter-receiver pairs as in Fig.~\ref{Fig:network_model}. In this network, transmitter ${\sf Tx}_i$ only wishes to communicate with receiver ${\sf Rx}_i$, $i=1,2$. The received signal at ${\sf Rx}_i$ is given by:
\begin{equation}
\label{eq:received_signal}
Y_i[t] = S_{ii}[t] X_{i}[t] \oplus S_{\bar{i}i}[t]X_{\bar{i}}[t], \qquad i=1,2 \text{~and~} \bar{i}\overset{\triangle}=3-i.
\end{equation}

Here, $X_i[t]$ is the binary transmitted signal from ${\sf Tx}_i$ and $S_{\bar{i}i}[t]$ captures the channel gain from transmitter ${\sf Tx}_{\bar{i}}$ to receiver ${\sf Rx}_i$ at time instant $t$. We assume the channel gains are independent (across users and time) binary random variables (i.e. $S_{ij}[t] \in \{0,1\}$):
\begin{equation}
\label{eq:bernoulli_dist}
S_{ij}[t]\stackrel{i.i.d.}{\thicksim}\mathcal{B}(1-\delta_{ij}), \qquad i,j=1,2,
\end{equation}
where $0\leq \delta_{ij} \leq1$ represents the erasure probability of the link between transmitter ${\sf Tx}_{i}$ and receiver ${\sf Rx}_j$, $S_{ij}[t]$ is distributed independently from other users and across time, and ``$\oplus$'' denotes the summation in the binary field (\emph{i.e.} XOR).

In the rest of paper, we use the following notations:

\noindent i- We use uppercase letters to represent random variables and lowercase letters for variables;

\noindent ii- We introduce
\begin{equation}
   S[t]\triangleq \bigg [S_{11}[t], S_{21}[t],S_{22}[t] ,S_{12}[t] \bigg ]; 
\end{equation}

\noindent iii- We define
\begin{equation}
S_{\bar{i}i}^{\tau}\triangleq \bigg [S_{\bar{i}i}[1], S_{\bar{i}i}[2], \ldots ,S_{\bar{i}i}[\tau] \bigg ]^{\top} \text{for a natural number}~\tau; 
\end{equation}

\noindent iv- We set
\begin{align}
S_{ii}^{\tau} X_{i}^{\tau} \oplus S_{\bar{i}i}^{\tau}X_{\bar{i}}^{\tau}\triangleq \bigg [S_{ii}[1]X_{i}[1] \oplus S_{\bar{i}i}[1]X_{\bar{i}}[1],\ldots,S_{ii}[\tau]X_{i}[\tau] \oplus S_{\bar{i}i}[\tau]X_{\bar{i}}[\tau]\bigg ]^{\top}.
\end{align}


We consider a scenario in which ${\sf Tx}_i$ intends to send message $W_i \in \{1,2,\ldots, 2^{n\lambda_i}\}$ over $n$ uses of the channel, and messages are uniformly selected and mutually independent of the communication channels. 
Here, we assume the channel state information at the transmitters (CSIT) is available with a unit time delay (${\sf Tx}_1$ and ${\sf Tx}_2$ know $S_{ij}[t-1], i,j=1,2$ at time instant $t$). Moreover, we consider an encoder at ${\sf Tx}_i$ that encodes message $W_i$ to the transmitted signal, $X_i^n$, as follows:
\begin{equation}
\label{eq:encode-function}
X_i[t]=f_i(W_i,S^{t-1}), \qquad i=1,2,
\end{equation}
where $S^{t-1}=\bigg[S[1],S[2],\ldots,S[t-1]\bigg]^{\top}$ shows the delayed CSIT at time $t$ and $f_i(.)$ is the encoding function that depends on the message $W_i$ and the delayed CSIT. Furthermore, at receiver ${\sf Rx}_i$, we consider a decoder that obtains $\hat{W}_i$ as the decoded message using the decoding function $g_i$ as below:
\begin{equation}
\label{eq:decode-function}
\hat{W}_i=g_i(Y_i^n,S^n), \qquad i=1,2,
\end{equation}
In this work, we assume ${\sf Rx}_i$ knows the instantaneous CSI. Therefore, in (\ref{eq:decode-function}), $g_i$ is defined as a function of received signal $Y_i^n$ and instantaneous channel state information. An error happens when $\hat{W}_i\neq W_i$ and average probability of decoding error is:
\begin{equation}
\label{eq:error-message}
e_i^n=\mathbb{E}[\Pr\small[\hat{W}_i\neq W_i\small] ], \qquad i=1,2,
\end{equation}
where the expected value depends on the random choice of messages at both transmitters. A rate-tuple $(\lambda_1,\lambda_2)$ is achievable if there exist encoding and decoding functions at the transmitters and the receivers, respectively, that result in $e_i^n \rightarrow 0$ when $n \rightarrow \infty$. The capacity region, $\mathcal{C}$, is the closure of all achievable rate-tuples. 
\subsection{Stability definitions}


In this paper, we assume data bits arrive at ${\sf Tx}_i$ according to a Poisson $(\lambda_{i})$ distribution, $i=1,2$, $0\leq \lambda_{i} \leq 1$, and the two processes are independent from each other. We further discuss the arrival rates in Remark~\ref{Remark:SlowRateChange}.

We need the following definitions and claim to define the $\epsilon$-stable throughput region:
\begin{definition}
\label{def:decodeable-bits}
An intended bit for ${\sf Rx}_i, i=1,2$, is decodable at time instant $t$ if ${\sf Rx}_i$ either receives it without any interference or gets sufficient number of equations to recover it by time instant $t$. 
\end{definition}


\begin{definition}
\label{def:delivered_rate}
The delivered rate by time $n$, $\hat{\lambda}_i^n$, is defined as 
\begin{equation}
  \hat{\lambda}_i^n\triangleq\mathbb{E}\left[\frac{K_i^n}{n}\right], \quad i=1,2,
\end{equation}
where $K_i^n$ is the random variable denoting the number of decodable bits at ${\sf Rx}_i$ during communication time $n$.
\end{definition}

We use $Q_i^{i|\emptyset}$ to denote the initial queue into which the newly arrived bits at ${\sf Tx}_i, i=1,2,$ join. Moreover, we define $Q_i^{F}$ to describe the final queue whose bits will not be retransmitted. We introduce some intermediate queues in Section~\ref{Subsection:QI} to describe our protocol. We note each data bit in $Q_i^{i|\emptyset}$ can: 1) stay in $Q_i^{i|\emptyset}$; 2) join $Q_i^{F}$; 3) join other intermediate queues. Also, we know $\lambda_i$ and $\hat{\lambda}^n_i$ describe the arrival rate at $Q_i^{i|\emptyset}$ and $Q_i^{F}$, respectively
. As a result, 

\begin{lemma}
\label{def:delivery_rate}
For the system defined above, we have \begin{equation}
  \hat{\lambda}^n_i \leq\mathrm{average~departure~rate~of}~Q_i^{i|\emptyset}, \quad i=1,2,
\end{equation}
\end{lemma}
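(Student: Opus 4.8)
The plan is to establish the inequality pathwise and then take expectations. Fix $i \in \{1,2\}$. Let $D_i^n$ denote the number of bits that have departed the initial queue $Q_i^{i|\emptyset}$ by time $n$, so that the average departure rate of $Q_i^{i|\emptyset}$ up to time $n$ is $\mathbb{E}[D_i^n/n]$. By Definition~\ref{def:delivered_rate} we have $\hat{\lambda}_i^n = \mathbb{E}[K_i^n/n]$, so the claim reduces to showing $K_i^n \leq D_i^n$ almost surely; taking expectations and dividing by $n$ then yields the bound.

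First I would argue that every bit counted by $K_i^n$ has necessarily left $Q_i^{i|\emptyset}$ by time $n$. By Definition~\ref{def:decodeable-bits}, an intended bit for ${\sf Rx}_i$ is decodable by time $n$ only if ${\sf Rx}_i$ has either received it without interference or collected enough equations to recover it; in either case the bit must have appeared in at least one transmitted symbol $X_i[\tau]$ with $\tau \leq n$, since by~(\ref{eq:decode-function}) the decoder forms $\hat{W}_i$ only from $Y_i^n$ and the CSI, and by~(\ref{eq:received_signal}) a bit never placed into a transmitted symbol cannot influence any $Y_i[\tau]$. By construction of the protocol, every newly arrived bit first enters $Q_i^{i|\emptyset}$, and a bit is used in a transmission only after it has been moved out of $Q_i^{i|\emptyset}$ into $Q_i^F$ or into one of the intermediate queues. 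Hence any transmitted bit has already departed $Q_i^{i|\emptyset}$, so the set of bits decodable at ${\sf Rx}_i$ by time $n$ is a subset of the set of bits that have departed $Q_i^{i|\emptyset}$ by time $n$, i.e. $K_i^n \leq D_i^n$ on every sample path.

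Equivalently, one can phrase this as flow conservation: over the window $[1,n]$, the bits leaving $Q_i^{i|\emptyset}$ are partitioned into those routed (eventually) to $Q_i^{F}$ and those routed to intermediate queues, so the cumulative arrivals to $Q_i^{F}$ are at most $D_i^n$; since $\hat{\lambda}_i^n$ is precisely the arrival rate into $Q_i^{F}$, the same inequality follows. Either way, taking expectations in $K_i^n \leq D_i^n$ and dividing by $n$ gives $\hat{\lambda}_i^n \leq \mathbb{E}[D_i^n/n]$, which is the stated bound; passing to the limit $n\to\infty$ if the long-run departure rate is preferred is immediate from the finite-$n$ estimate.

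The main obstacle is the middle step: justifying that ``decodable'' forces ``already departed from $Q_i^{i|\emptyset}$.'' This relies on a structural property of the queue-based protocol to be detailed in Section~\ref{Section:Protocol}, namely that no bit is ever transmitted while it still resides in the initial queue, together with the fact that the channel model makes ${\sf Rx}_i$'s estimate a function of $Y_i^n$ alone. Once this is granted, the remainder is straightforward bookkeeping and does not require any queueing-stability machinery.
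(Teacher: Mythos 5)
Your proof is correct and takes essentially the same route as the paper, whose two-line argument is exactly your ``flow conservation'' phrasing: every bit reaching $Q_i^{F}$ originated in $Q_i^{i|\emptyset}$, while some bits that depart $Q_i^{i|\emptyset}$ never reach $Q_i^{F}$, hence the delivered rate is dominated by the departure rate. One minor inaccuracy in your supporting premise: in this protocol bits \emph{are} transmitted while still residing in $Q_i^{i|\emptyset}$ (it is an origin queue, and bits move to a destination queue only after the delayed CSI is learned), but your pathwise bound $K_i^n \le D_i^n$ survives because any bit that becomes decodable is moved out of the initial queue at that same time instant, so decodable-by-$n$ still implies departed-by-$n$.
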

\begin{proof}
The proof is straightforward. We know:

i) All delivered bits at $Q_i^{F}$ come from $Q_i^{i|\emptyset}$;

ii) Some bits departed from $Q_i^{i|\emptyset}$ may not join $Q_i^{F}$ (get stuck in other queues).

As a result, $\hat{\lambda}^n_i \leq \text{average}~ \text{departure rate of}~Q_i^{i|\emptyset},~i=1,2,$ which proves the claim.
\end{proof}
Furthermore, Loynes theorem in~\cite{loynes1962stability}  states that:
\begin{theorem}
(Theorem 3 of~\cite{loynes1962stability}) A queue with $\mathbb{E}[\mathrm{arrival~rate}-\mathrm{departure~rate}]<0$ is stable if the arrival rate and departure rate of the queue are jointly stationary.
\end{theorem}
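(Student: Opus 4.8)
Since the displayed statement is Loynes' classical stability theorem, the natural plan is to reproduce \emph{Loynes' backward pathwise construction}, the canonical route to existence of a stationary regime for a $G/G/1$-type recursion. First I would put the queue dynamics in Lindley form: writing $Q_k$ for the occupancy just after slot $k$ and $U_k=\alpha_k-\beta_k$ for the net input in slot $k$ (with $\alpha_k,\beta_k$ the jointly stationary per-slot arrival and departure amounts), the dynamics are $Q_{k+1}=(Q_k+U_k)^+$, where $(\cdot)^+=\max(\cdot,0)$. Unfolding this recursion from an empty queue initialized at time $-n$ yields the closed form $Q_0^{(n)}=\max_{0\le m\le n}\sum_{j=-m}^{-1}U_j$, the $m=0$ term being the empty sum $0$; thus the time-$0$ occupancy is a running maximum of reversed partial sums of the net input.

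The structural engine of the proof is monotonicity in the initialization time: enlarging the index set of the maximum can only increase it, so $Q_0^{(n)}$ is nondecreasing in $n$ and therefore converges almost surely to a limit $Q_0^{(\infty)}\in[0,\infty]$. The crux is to show this limit is \emph{finite} almost surely exactly when the drift is negative. Here I would invoke Birkhoff's ergodic theorem: joint stationarity gives $\tfrac1m\sum_{j=-m}^{-1}U_j\to\mathbb{E}[U_0\mid\mathcal{I}]$ almost surely as $m\to\infty$, where $\mathcal{I}$ is the invariant $\sigma$-field, and the hypothesis $\mathbb{E}[\mathrm{arrival~rate}-\mathrm{departure~rate}]=\mathbb{E}[U_0]<0$ forces this conditional mean to be negative (in the ergodic case it is the constant $\mathbb{E}[U_0]$). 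Hence the reversed partial sums drift to $-\infty$, the supremum defining $Q_0^{(\infty)}$ is attained at a finite index, and $Q_0^{(\infty)}<\infty$ almost surely.

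Finally I would verify that $Q_0^{(\infty)}$ is a genuine stationary solution of the Lindley recursion: applying the shift operator identifies the construction started at $-n$ with the one started at $-n-1$ advanced by one slot, and passing to the monotone limit shows that the occupancy process generated from $Q_0^{(\infty)}$ is shift-invariant and satisfies $Q_{k+1}=(Q_k+U_k)^+$. Its marginal law is the sought stationary distribution, so the occupancy does not drift to infinity and the queue is stable. The main obstacle I anticipate is not the monotone-limit bookkeeping but a careful handling of the stationarity hypothesis: in the merely stationary (non-ergodic) case one must establish almost-sure finiteness conditionally on $\mathcal{I}$ rather than via a single mean, and one must be precise that ``stable'' here is taken to mean $Q_0^{(\infty)}<\infty$ almost surely (equivalently, tightness of the occupancy process), which is the form in which the theorem is then combined with Lemma~\ref{def:delivery_rate} to analyze the delivered rate.
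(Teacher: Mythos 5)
The paper does not prove this statement at all: it is quoted verbatim as Theorem~3 of Loynes' 1962 paper and used as an imported black box, so there is no in-paper argument to compare against. Your reconstruction is the canonical Loynes proof --- Lindley recursion, backward coupling from an empty queue, monotonicity of $Q_0^{(n)}$ in the initialization time, Birkhoff's ergodic theorem to show the reversed partial sums drift to $-\infty$, and shift-invariance of the monotone limit --- and it is correct in every step. The one substantive point, which you already flag yourself, is that in the merely stationary (non-ergodic) case the hypothesis $\mathbb{E}[U_0]<0$ does \emph{not} force $\mathbb{E}[U_0\mid\mathcal{I}]<0$ almost surely (a mixture of a subcritical and a supercritical ergodic component can have negative overall mean while the queue diverges on the supercritical component); so either ergodicity or the conditional-mean condition must be added for the theorem to hold as stated. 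This is a genuine restriction on the statement rather than a defect of your argument, and it is harmless for the paper's use of the result since the arrival and service processes there are i.i.d.-driven and hence ergodic.
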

In this paper, we note that the arrival rate and departure rate of our initial queues are strictly stationary. So for the initial queue, based on Lemma~\ref{def:delivery_rate}, it suffices that the delivered rate at the corresponding receiver to be greater than 
the arrival rate at $Q_i^{i|\emptyset}$ for $i=1,2$. Based on this argument, we define the $\epsilon$-stable region as follows: 

\begin{definition}
\label{def:stable_rate}
The $\epsilon$-stable throughput region is the closure of all arrival rates $(\lambda_1,\lambda_2)$ for which
\begin{equation}
\label{eq:conditionstability}
 \lambda_{i}(1-\epsilon) < \hat{\lambda}^n_{i},\quad \textrm{and} \quad \epsilon \ll 1,  
\end{equation}
can be simultaneously satisfied, $i=1,2$, as $n \rightarrow \infty$. Here, $\epsilon$ represents the acceptable error rate, which is the percentage difference between the arrival rate at ${\sf Tx}_i$ and the delivered rate at ${\sf Rx}_i$.
\end{definition}


\begin{remark}
\label{remark:small_epsilon}
We note that if $\epsilon$ in \eqref{eq:conditionstability} goes to zero as $n \rightarrow \infty$, then Definition~\ref{def:stable_rate} is in agreement with the conventional definition of stability region. However, to make the problem tractable, for finite values of $n$ used in our simulations, we assign a small value to $\epsilon$. 
\end{remark}

In the remaining of this paper for simplicity, we drop $\epsilon$, and use the term ``stable throughput region''. When needed, we specify the value of $\epsilon$.

In this paper, we intend to compare the capacity and stable throughput regions in two-user interference network with delayed-CSIT. To do this, we derive the outer-bound of stable throughput region in non-homogeneous settings and use simulations to evaluate the results. 

\section{Main Results}
\label{sect:main_result}

We state our main results in two theorems and one claim.


\begin{theorem}
\label{thm:outer-bound_capacity}
The set of all rate-tuples $(\hat{\lambda}_1,\hat{\lambda}_2)$ satisfying
\begin{equation}
\label{eq:outer-bound}
\bar{\mathcal{C}} \equiv \left\{ \begin{array}{ll}
\vspace{1mm} 0 \leq \hat{\lambda}_i \leq (1-\delta_{ii}), & i = 1,2, \\
\hat{\lambda}_i + \frac{1-\delta_{ii}\delta_{i\bar{i}}}{(1-\delta_{i\bar{i}})}\hat{\lambda}_{\bar{i}} \leq \frac{(1-\delta_{ii}\delta_{i\bar{i}})(1-\delta_{\bar{i}\bar{i}}\delta_{i\bar{i}})}{(1-\delta_{i\bar{i}})}, & i = 1,2, \bar{i}=3-i.
\end{array} \right.
\end{equation}
includes the capacity region of the two-user intermittent interference network of Section~\ref{subsec:channel_model}.
\end{theorem}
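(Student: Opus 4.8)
\emph{The plan} is a standard single-letter converse: since $\bar{\mathcal C}$ is claimed to contain the capacity region, I will show that every achievable rate pair $(\lambda_1,\lambda_2)$ satisfies the four inequalities defining $\bar{\mathcal C}$ (reading $\hat\lambda_i$ as $\lambda_i$). Write $\epsilon_n$ for a generic quantity tending to $0$; Fano's inequality at ${\sf Rx}_i$, together with $H(W_i)=n\lambda_i$ and the independence of $W_i$ from $S^n$, gives the usual starting point $n\lambda_i\le I(W_i;Y_i^n\mid S^n)+n\epsilon_n$. Two structural facts will drive the whole argument. \textbf{(a)} Because $W_1,W_2$ are independent of all of $S^n$ and $X_i[t]=f_i(W_i,S^{t-1})$ by \eqref{eq:encode-function}, the conditional law given $S^n$ of any quantity built from $(W_1,W_2,S^{t-1})$ coincides with its law given $S^{t-1}$; combined with the fact that $S[t]$ is independent of the past and of the messages, this lets every per-letter entropy below split as ``channel probability at time $t$'' times ``residual entropy of the relevant transmit symbol''. \textbf{(b)} Given $(W_i,S^n)$ the whole sequence $X_i^n$ is known, so in any received vector $Y_j^n$ the contribution of ${\sf Tx}_i$ can be XOR-ed off, replacing $Y_j^n$ by an informationally equivalent single-transmitter observation.

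\emph{Single-user bounds $0\le\lambda_i\le 1-\delta_{ii}$.} Give the genie $W_{\bar i}$ to ${\sf Rx}_i$; this only increases mutual information, so using (b) $n\lambda_i\le I\!\left(W_i;Y_i^n\mid W_{\bar i},S^n\right)+n\epsilon_n=I\!\left(W_i;S_{ii}^{n}X_i^{n}\mid W_{\bar i},S^n\right)+n\epsilon_n\le H\!\left(S_{ii}^{n}X_i^{n}\mid S^n\right)+n\epsilon_n$, where $S_{ii}^{n}X_i^{n}$ denotes the length-$n$ vector with entries $S_{ii}[t]X_i[t]$. Expanding this entropy by the chain rule over $t$ and applying (a), the $t$-th term equals $\Pr[S_{ii}[t]=1]=1-\delta_{ii}$ times an averaged residual entropy of $X_i[t]$, which is at most $1$; summing and letting $n\to\infty$ gives the bound.

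\emph{Weighted-sum bounds.} Fix $i=1$; the case $i=2$ is obtained by the relabeling $1\leftrightarrow 2$ (which also swaps $\delta_{12}$ and $\delta_{21}$). Three reductions: (i) giving $W_2$ to ${\sf Rx}_1$ and removing the now-known interference, $n\lambda_1\le H\!\left(S_{11}^{n}X_1^{n}\mid S^n\right)+n\epsilon_n$; (ii) by (b), $H(Y_2^n\mid W_2,S^n)=H\!\left(S_{12}^{n}X_1^{n}\mid S^n\right)$ (the conditioning on $W_2$ drops because $S_{12}^{n}X_1^{n}$ is a function of $(W_1,S^n)$), so $n\lambda_2\le I(W_2;Y_2^n\mid S^n)+n\epsilon_n=H(Y_2^n\mid S^n)-H\!\left(S_{12}^{n}X_1^{n}\mid S^n\right)+n\epsilon_n$; (iii) $Y_2[t]$ is deterministic whenever $S_{22}[t]=S_{12}[t]=0$, hence $H(Y_2^n\mid S^n)\le n(1-\delta_{22}\delta_{12})$. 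The crux is the entropy inequality
\begin{equation}
\label{eq:key-entropy-plan}
H\!\left(S_{12}^{n}X_1^{n}\mid S^n\right)\;\ge\;\frac{1-\delta_{12}}{1-\delta_{11}\delta_{12}}\,H\!\left(S_{11}^{n}X_1^{n}\mid S^n\right).
\end{equation}
To obtain \eqref{eq:key-entropy-plan}, I would expand $H(S_{11}^{n}X_1^{n}\mid S^n)$, $H(S_{12}^{n}X_1^{n}\mid S^n)$, and the joint entropy $H(S_{11}^{n}X_1^{n},S_{12}^{n}X_1^{n}\mid S^n)$ by the chain rule over $t$; by (a) the $t$-th term of each is a probability --- $1-\delta_{11}$, $1-\delta_{12}$, and $1-\delta_{11}\delta_{12}$ respectively --- times the expected entropy of $X_1[t]$ conditioned on $S^{t-1}$ and on the values of $X_1[\tau]$ over the past indices $\tau$ at which, respectively, $S_{11}[\tau]=1$, $S_{12}[\tau]=1$, or $S_{11}[\tau]=1$ or $S_{12}[\tau]=1$. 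Since the third index set contains the second, the corresponding residual entropies dominate term by term, giving $\tfrac{1}{1-\delta_{11}\delta_{12}}H(S_{11}^{n}X_1^{n},S_{12}^{n}X_1^{n}\mid S^n)\le \tfrac{1}{1-\delta_{12}}H(S_{12}^{n}X_1^{n}\mid S^n)$, and \eqref{eq:key-entropy-plan} follows from $H(S_{11}^{n}X_1^{n},S_{12}^{n}X_1^{n}\mid S^n)\ge H(S_{11}^{n}X_1^{n}\mid S^n)$. Multiplying the bound on $n\lambda_1$ by $\tfrac{1-\delta_{12}}{1-\delta_{11}\delta_{12}}$, adding the bound on $n\lambda_2$, and using \eqref{eq:key-entropy-plan} to cancel the two copies of $H(S_{12}^{n}X_1^{n}\mid S^n)$ leaves $\tfrac{1-\delta_{12}}{1-\delta_{11}\delta_{12}}\lambda_1+\lambda_2\le 1-\delta_{22}\delta_{12}$ as $n\to\infty$; clearing denominators yields the displayed constraint for $i=1$.

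\emph{Where the difficulty lies.} Everything apart from \eqref{eq:key-entropy-plan} is routine Fano and data-processing bookkeeping. I expect the genuine obstacle to be \eqref{eq:key-entropy-plan} --- it is also the only step that really differs from the homogeneous converse of \cite{vahid2014capacity}: one must expand the three $n$-letter entropies along precisely the filtration that makes the delayed-CSIT fact (a) usable, so that each slot contributes ``probability the symbol is heard on the relevant link(s)'' times ``residual entropy of $X_1[t]$'', after which the nesting of the ``heard'' index sets finishes the comparison. Propagating the three erasure probabilities $\delta_{11},\delta_{12},\delta_{22}$ separately through every inequality, rather than collapsing them to a single $\delta$ as in \cite{vahid2014capacity}, is exactly what upgrades that result to the non-homogeneous statement of Theorem~\ref{thm:outer-bound_capacity}.
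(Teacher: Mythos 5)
Your proposal is correct and follows essentially the same route as the paper's Appendix~A proof: conditioning on $W_2$ at ${\sf Rx}_1$ to strip the interference, the per-letter entropy inequality $H(S_{12}^nX_1^n\mid S^n)\ge\frac{1-\delta_{12}}{1-\delta_{11}\delta_{12}}H(S_{11}^nX_1^n\mid S^n)$ as the crux (derived by the same delayed-CSIT factorization and nesting of conditioning sets), and the bound $H(Y_2^n\mid S^n)\le n(1-\delta_{22}\delta_{12})$. The only differences are cosmetic: you normalize the weighted sum by $\frac{1-\delta_{12}}{1-\delta_{11}\delta_{12}}$ on $\lambda_1$ rather than by its reciprocal on $\lambda_2$, and you give a genie-based derivation of the single-user bound where the paper simply asserts it.
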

The proof of this theorem is deferred to Appendix~\ref{apndx:A1}.

\begin{theorem}
\label{thm:C_serves_outer-bound}
The outer-bound region $\bar{\mathcal{C}}$ defined in \eqref{eq:outer-bound} also serves as an outer-bound for the stable throughput region $\mathcal{D}$ defined in Definition~\ref{def:stable_rate}.
\end{theorem}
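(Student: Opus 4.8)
The plan is to establish the chain of inclusions $\mathcal{D}\subseteq\mathcal{C}\subseteq\bar{\mathcal{C}}$. The second inclusion is exactly the content of Theorem~\ref{thm:outer-bound_capacity}, so the real work is to show that every arrival-rate pair in the stable throughput region of Definition~\ref{def:stable_rate} is an information-theoretically achievable rate pair for the channel of Section~\ref{subsec:channel_model}; the statement then follows by a monotonicity property of $\bar{\mathcal{C}}$.

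First I would fix $(\lambda_1,\lambda_2)\in\mathcal{D}$ and extract from \eqref{eq:conditionstability} the inequality $\lambda_i(1-\epsilon)<\hat{\lambda}^n_i$, valid for both $i=1,2$ as $n\to\infty$. Writing $\hat{\lambda}_i\triangleq\lim_{n\to\infty}\hat{\lambda}^n_i$ (along a subsequence if necessary) and then letting $\epsilon\to 0$ as in Remark~\ref{remark:small_epsilon}, this yields $\lambda_i\le\hat{\lambda}_i$ for $i=1,2$.

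Next I would argue that $(\hat{\lambda}_1,\hat{\lambda}_2)$ itself lies in $\mathcal{C}$ by reading the queue-based protocol as a channel code. Truncating the protocol at a large horizon $n$, the $K_i^n$ bits that are decodable at ${\sf Rx}_i$ by time $n$ are, by Definition~\ref{def:decodeable-bits}, each recovered correctly (either received interference-free, or reconstructed from a sufficient set of received equations using the receiver's instantaneous CSI). The transmitted signals $X_i[t]$ are functions of the arrived bits and the delayed CSIT only, matching the encoder form (\ref{eq:encode-function}), and the reconstruction matches the decoder form (\ref{eq:decode-function}). To pass from a random message size $K_i^n$ to the fixed-rate formulation of Section~\ref{subsec:channel_model}, I would fix $\eta>0$, use concentration of the Poisson arrival counts (and hence of $K_i^n$) to obtain $\Pr[K_i^n<n(\hat{\lambda}_i-\eta)]\to 0$, declare errors on those atypical realizations and on any bit not delivered by time $n$, and keep only the first $\lceil n(\hat{\lambda}_i-\eta)\rceil$ bits; this produces an encoding/decoding pair of rate $\hat{\lambda}_i-\eta$ with $e_i^n\to 0$. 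Since $\eta>0$ is arbitrary and $\mathcal{C}$ is closed, $(\hat{\lambda}_1,\hat{\lambda}_2)\in\mathcal{C}$, hence $(\hat{\lambda}_1,\hat{\lambda}_2)\in\bar{\mathcal{C}}$ by Theorem~\ref{thm:outer-bound_capacity}.

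Finally, I would close the argument by noting that $\bar{\mathcal{C}}$ is coordinate-monotone: every defining inequality in \eqref{eq:outer-bound} has nonnegative coefficients, so if $(\hat{\lambda}_1,\hat{\lambda}_2)\in\bar{\mathcal{C}}$ and $0\le\lambda_i\le\hat{\lambda}_i$ for $i=1,2$ then $(\lambda_1,\lambda_2)\in\bar{\mathcal{C}}$; combining with $\lambda_i\le\hat{\lambda}_i$ from the first step gives $\mathcal{D}\subseteq\bar{\mathcal{C}}$. The hard part is the middle step --- converting the infinite-horizon, feedback-aided, Poisson-driven queueing protocol into a legitimate fixed-blocklength code so that the converse behind Theorem~\ref{thm:outer-bound_capacity} applies verbatim; the delicacy lies entirely in handling the randomness of $K_i^n$ and the uniform-message convention, and it requires no new channel-coding ingredient beyond truncation, concentration, and expurgation.
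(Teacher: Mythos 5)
Your proposal is correct and rests on the same core reduction as the paper's proof, namely that any stable rate-tuple must be an achievable rate-tuple and hence must satisfy the information-theoretic outer bound of Theorem~\ref{thm:outer-bound_capacity}; the paper states this in two sentences via an emulation argument (a backlogged system with all bits available at $t=0$ can simulate Poisson arrivals locally), whereas you run the reduction in the other direction by converting the queueing protocol into a fixed-blocklength code. Your version supplies the technical details the paper leaves implicit --- truncation, concentration of $K_i^n$, expurgation, and the coordinate-monotonicity of $\bar{\mathcal{C}}$ needed to pass from $(\hat{\lambda}_1,\hat{\lambda}_2)$ back to $(\lambda_1,\lambda_2)$ --- so it is a more careful rendering of essentially the same argument.
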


\begin{proof}
 
We know that the outer-bound region $\bar{\mathcal{C}}$ is derived when all transmitted bits are available at $t=0$. Such a system can easily emulate stochastic arrivals: ${\sf Tx}_i, i=1,2,$ generates a random variable with Poisson $(\lambda_{i})$ distribution and selects from the available bits a subset of size equal to the outcome of the random variable at each time. In other words, any stable rate-tuple must be an achievable rate-tuple. This implies that ${\mathcal{D}}\subset\bar{\mathcal{C}}$ and proves the theorem.
\end{proof}

In this work, we use simulation to evaluate our results for finite values of $n$ and small values of $\epsilon$. Therefore, we define $\bar{\mathcal{C}}_{\epsilon}$ as the simulated outer-bound region as below:
\begin{equation}
  \bar{\mathcal{C}}_{\epsilon}\triangleq \{(\hat{\lambda}_1,\hat{\lambda}_2)|\hat{\lambda}_1,\hat{\lambda}_2\ge 0, (\hat{\lambda}_1+\epsilon,\hat{\lambda}_2+\epsilon)\in \bar{\mathcal{C}}\}.  
\end{equation}

\begin{claim}
\label{claim:achieves_different_case_simulation}
We propose a linear queue-based transmission protocol that we evaluate through simulations and show $\mathcal{D}\equiv\bar{\mathcal{C}}_{\epsilon}$ for $\epsilon=0.01$ under the following cases: 

\noindent 1- Homogeneous channel settings ($\delta_{ij}=\delta, i,j=1,2$) when $0 \leq \delta \leq 1$;

\noindent 2- Symmetric channel settings ($\delta_d=\delta_{ii}, \delta_c=\delta_{i\bar{i}},i = 1,2, \bar{i}=3-i$) when $\delta_d\leq 0.5$ and $0\leq \delta_c\leq \frac{1}{2-\delta_d}$;

\noindent 3- Non-homogeneous settings when $\frac{3-\sqrt{5}}{2}<\delta_{ii}\leq 0.5$, $0 \leq \delta_{\bar{i}\bar{i}}\leq \frac{3-\sqrt{5}}{2}$, $0.5 \leq \delta_{i\bar{i}}< \frac{1}{2-\delta_{ii}}$, $\delta_{ii} < \delta_{\bar{i}i} \leq 0.5$, $,i = 1,2$, and $\bar{i}=3-i$.

\noindent 4- 
For symmetric channel settings when $\frac{3-\sqrt{5}}{2}\leq \delta_d\leq 0.5$ and $\frac{1}{2-\delta_d} < \delta_c\leq 1$, we numerically show the inner and outer bounds gap.
\end{claim}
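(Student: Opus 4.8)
The plan is to establish Claim~\ref{claim:achieves_different_case_simulation} in three parts: (i) give a precise description of the queue-based protocol and the queue hierarchy it maintains; (ii) argue analytically that, in each stated regime, the long-run departure rate of the initial queue $Q_i^{i|\emptyset}$ can be driven to the boundary of $\bar{\mathcal{C}}$, so that Lemma~\ref{def:delivery_rate}, Loynes' theorem, and the matching outer-bound of Theorem~\ref{thm:C_serves_outer-bound} together pin $\mathcal{D}$ down; and (iii) corroborate the analysis by simulation with large $n$ and $\epsilon=0.01$, which in Case~4 additionally quantifies the residual inner/outer gap.

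First I would set up the protocol. Each ${\sf Tx}_i$ alternates, according to a schedule that both transmitters can compute, between a fresh-transmission mode serving bits from $Q_i^{i|\emptyset}$ and a retransmission mode. Using the delayed CSIT $S^{t-1}$, both transmitters deterministically classify each just-transmitted fresh bit as: delivered to ${\sf Rx}_i$ (route to $Q_i^{F}$); overheard cleanly at ${\sf Rx}_{\bar{i}}$ but still needed by ${\sf Rx}_i$ (route to an intermediate ``side-information'' queue); involved in a collision that left a useful binary combination at a receiver (route to an intermediate ``combination'' queue); or still known only to ${\sf Tx}_i$ (keep in $Q_i^{i|\emptyset}$). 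In retransmission mode, ${\sf Tx}_1$ and ${\sf Tx}_2$ fire simultaneously from their side-information queues, and since each receiver already holds the interfering bit, a single slot with an active direct link clears one bit at that receiver; this is exactly where the outer-bound corner $\hat{\lambda}_i\le 1-\delta_{ii}$ is met. Because both transmitters observe the same $S^{t-1}$, they agree on the mode schedule and on the pairing of retransmitted bits, so no inter-transmitter message exchange is needed — this is the decentralized feature, translated from the delayed-CSIT IM scheme of~\cite{vahid2014capacity}.

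Next I would compute throughput. Using the Bernoulli link law~\eqref{eq:bernoulli_dist}, I would write the departure rate of $Q_i^{i|\emptyset}$ as a function of the fractions of slots allocated to the two modes, optimize over that allocation, and check that the resulting achievable region coincides with $\bar{\mathcal{C}}$ exactly in Cases~1--3 and lies strictly inside it in Case~4. I expect the case boundaries ($\delta_{ii}\le 1/2$; $\delta_{i\bar{i}}\le 1/(2-\delta_{ii})$; the $\frac{3-\sqrt{5}}{2}$ thresholds) to emerge precisely as the conditions under which (a) each intermediate side-information queue is itself stable under the arrival process induced by the fresh-transmission mode, and (b) the two users produce \emph{balanced} amounts of side information, so that no retransmission slot is wasted carrying a bit useful to only one receiver. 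This balancing mechanism is the ``new ingredient'' promised in the introduction and is the precise point where the construction of~\cite{pan2016stability} breaks down (Remark~\ref{remark:fault_example}). For Case~4 I would report the explicit inner bound obtained from this computation alongside $\bar{\mathcal{C}}$.

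The main obstacle is the stability analysis of the coupled queueing network: the arrival rate into each intermediate queue is itself governed by the service given to $Q_i^{i|\emptyset}$, while the service rate of that intermediate queue depends on the matching supply generated by the \emph{other} user, so the two users' queues are intertwined and an exact Markov-chain treatment is unwieldy — the same $K$-user intractability noted in~\cite{wang2012capacity}, already present at $K=2$ under intermittency. My plan is therefore to prove membership in $\mathcal{D}$ rigorously only at the corner rate-tuples of $\bar{\mathcal{C}}$, via a dominant-system argument — replace starved queues by permanently backlogged ones so the service processes become stationary and decouple, then invoke Loynes' theorem — and to reach the interior by convexity; the simulation then carries the remaining load, namely confirming the dominant-system bound is tight at finite $n$ with $\epsilon=0.01$ and covering the non-homogeneous Case~3, whose corner-point bookkeeping is the heaviest. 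A secondary obstacle is justifying that the finite-$n$, $\epsilon$-relaxed simulated region $\bar{\mathcal{C}}_{\epsilon}$ is a faithful proxy for the asymptotic $\mathcal{D}$, which I would address by sweeping $n$ and $\epsilon$ and exhibiting convergence of the measured region.
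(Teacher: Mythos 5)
Your overall architecture---construct a queue-based protocol that routes bits by delayed-CSIT outcome into side-information and combination queues, then let simulation at $\epsilon=0.01$ certify that the achieved region meets $\bar{\mathcal{C}}_{\epsilon}$---matches the paper's, and since the claim is explicitly a simulation claim, that outline is the right one. However, you take on analytical machinery the paper never deploys: the paper does not compute departure rates, does not use a dominant-system argument, and does not prove cornerpoint achievability via Loynes' theorem (Loynes appears only to motivate Definition~\ref{def:stable_rate}). Its entire achievability argument is the explicit construction---six queues per transmitter with a strict priority order (Table~\ref{table:priority}), a $35\times 16$ control table, three XOR opportunities, and a periodic update mechanism---followed by numerics. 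Your ``reach the interior by convexity'' step is also not how the paper operates and is delicate under stochastic arrivals for a coupled two-user queueing system; the paper instead randomizes the routing decision \emph{per slot} with zone-dependent probabilities $p_A,p_B,p_C$ from \eqref{eq:P_C}--\eqref{eq:P_AP_B} that interpolate between cornerpoint policies, synchronized across nodes by a common pseudo-random seed.

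The one concrete gap is in the protocol specification itself. The ingredient the paper demonstrates to be indispensable---see Fig.~\ref{Fig:ordinary_pross_rate_p_0.6}, where applying cornerpoint~B's rules at cornerpoint~A strictly shrinks the region---is that whenever a channel realization admits more than one valid routing of the collided bits, the choice must depend on where $(\lambda_1,\lambda_2)$ sits relative to the cornerpoints (the flexible cases $F_1$--$F_8$ of Table~\ref{table:flexible queues} and the zones $\mathcal{Z}_1$--$\mathcal{Z}_5$). Your ``balanced side information'' remark identifies the right phenomenon, but your protocol as described (coarse alternation between a fresh-transmission mode and a retransmission mode, with an offline optimization of the time split) never specifies that per-slot, per-realization rule; nor does it contain an analogue of the top-priority common-interest queue $Q_i^{1,2|\emptyset}$ or of the XOR combinations across $Q_i^{c_1}$, $Q_i^{i|\bar{i}}$, and $Q_i^{\bar{i}|i}$ that feed it. Without these, the simulation you propose would exhibit a strict gap to $\bar{\mathcal{C}}_{\epsilon}$ at the asymmetric cornerpoints of Cases~2 and~3, so the claim as stated would not be verified; the routing table must be fully pinned down before the numerical step can carry the load you assign to it.
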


Fig.~\ref{Fig:corner} depicts the stable throughput region with delayed-CSIT when $\delta_{11}=0.4$, $\delta_{12}=0.6$, $\delta_{22}=0.2$, and $\delta_{21}=0.5$. We will explain the details of our transmission protocol in Sections~\ref{Section:Protocol} and \ref{sect:non-homogeneous} and show $\mathcal{D}\equiv{\mathcal{C}}_{\epsilon}$ in the above cases by simulation in Section~\ref{Section:Simulation}. 

In this work, we define cornerpoint ``A'' as the point where $\hat{\lambda}_1=1-\delta_{11}$ and $\hat{\lambda}_2$ is maximum. Similarly, we define cornerpoint ``B'' by changing the users' labels. Moreover, we define cornerpoint ``C'' as the sum-rate cornerpoint where $\hat{\lambda}_1+\hat{\lambda}_2$ is maximum, see Fig.~\ref{Fig:corner}.

\begin{figure}[!ht]
  \centering

\tikzset{every picture/.style={line width=0.6pt}} 

\begin{tikzpicture}[x=0.6pt,y=0.6pt,yscale=-.85,xscale=0.85]

\draw [line width=1.5]  (67.5,260.66) -- (289.64,260.66)(89.71,61.25) -- (89.71,282.82) (282.64,255.66) -- (289.64,260.66) -- (282.64,265.66) (84.71,68.25) -- (89.71,61.25) -- (94.71,68.25)  ;
\draw [color={rgb, 255:red, 208; green, 2; blue, 27 }  ,draw opacity=1 ][line width=1.5]  [dash pattern={on 1.69pt off 2.76pt}]  (250.52,107.9) -- (250.52,260.29) ;
\draw [color={rgb, 255:red, 208; green, 2; blue, 27 }  ,draw opacity=1 ][line width=1.5]  [dash pattern={on 1.69pt off 2.76pt}]  (90.14,141.13) -- (280.5,140.43) ;
\draw [color={rgb, 255:red, 208; green, 2; blue, 27 }  ,draw opacity=1 ][line width=1.5]  [dash pattern={on 1.69pt off 2.76pt}]  (122.5,126.43) -- (272.5,195.43) ;
\draw [color={rgb, 255:red, 208; green, 2; blue, 27 }  ,draw opacity=1 ][line width=1.5]  [dash pattern={on 1.69pt off 2.76pt}]  (194.5,118.43) -- (261.5,250.43) ;
\draw [color={rgb, 255:red, 74; green, 88; blue, 226 }  ,draw opacity=1 ][line width=1.5]    (250.5,229.43) -- (250.52,260.29) ;
\draw [color={rgb, 255:red, 74; green, 88; blue, 226 }  ,draw opacity=1 ][line width=1.5]    (153.19,140.9) -- (221.5,171.43) ;
\draw [color={rgb, 255:red, 74; green, 88; blue, 226 }  ,draw opacity=1 ][line width=1.5]    (221.5,171.43) -- (250.5,229.43) ;
\draw [color={rgb, 255:red, 74; green, 88; blue, 226 }  ,draw opacity=1 ][fill={rgb, 255:red, 74; green, 144; blue, 226 }  ,fill opacity=1 ][line width=1.5]    (90.14,141.13) -- (153.05,141.13) ;
\draw [color={rgb, 255:red, 208; green, 2; blue, 27 }  ,draw opacity=1 ][line width=1.5]  [dash pattern={on 1.69pt off 2.76pt}]  (261.57,73.29) -- (296.57,73.29) ;
\draw [color={rgb, 255:red, 74; green, 88; blue, 226 }  ,draw opacity=1 ][fill={rgb, 255:red, 74; green, 144; blue, 226 }  ,fill opacity=1 ][line width=1.5]    (260.57,98.29) -- (296,98.29) ;

\draw (58,134) node [anchor=north west][inner sep=0.75pt]  [font=\footnotesize]  {${\textstyle 0.6}$};
\draw (242,266) node [anchor=north west][inner sep=0.75pt]  [font=\footnotesize]  {$0.8$};
\draw (150.2,123.17) node [anchor=north west][inner sep=0.75pt]  [font=\footnotesize] [align=left] {\textbf{A}};
\draw (221.33,156.24) node [anchor=north west][inner sep=0.75pt]  [font=\scriptsize] [align=left] {\textbf{C}};
\draw (254.33,219.24) node [anchor=north west][inner sep=0.75pt]  [font=\scriptsize] [align=left] {\textbf{B}};
\draw (62,52) node [anchor=north west][inner sep=0.75pt]    {$\hat{\lambda}_{1}$};
\draw (278.67,268) node [anchor=north west][inner sep=0.75pt]    {$\hat{\lambda}_{2}$};
\draw (299,65) node [anchor=north west][inner sep=0.75pt]   [align=left] {{\footnotesize Information-theoritic boundaries}};
\draw (299,89) node [anchor=north west][inner sep=0.75pt]   [align=left] {{\footnotesize Stable throughput region}};
\draw (115,145) node [anchor=north west][inner sep=0.7pt]  [font=\normalsize]  {$\binom{0.6}{0.36}$};
\draw (158,172) node [anchor=north west][inner sep=0.7pt]  [font=\normalsize]  {$\binom{0.4397}{0.6486}$};
\draw (190,225) node [anchor=north west][inner sep=0.7pt]  [font=\normalsize]  {$\binom{0.152}{0.8}$};

\end{tikzpicture}
  \caption{\it Stable throughput region with Delayed-CSIT when $\delta_{11}=0.4$, $\delta_{12}=0.6$, $\delta_{22}=0.2$ and $\delta_{21}=0.5$. \label{Fig:corner}}
\end{figure}
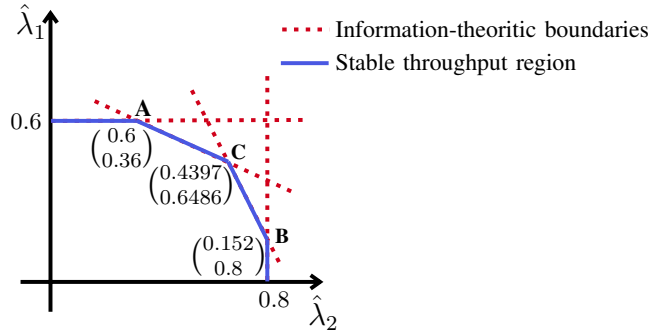
\section{Transmission Protocol}
\label{Section:Protocol}

In this section, we propose a general transmission protocol which exploits various queues and coding opportunities to achieve the entire stable throughput region. 
We focus on homogeneous settings where $\delta_{ij}=\delta$ and divide $\delta$ into three parts as $0\leq \delta < \frac{3-\sqrt{5}}{2}$, $\frac{3-\sqrt{5}}{2}\leq \delta < 0.5$ and $0.5\leq \delta \leq 1$. In this section, we explain our transmission protocol for $0\leq \delta < \frac{3-\sqrt{5}}{2}$ and defer the detailed description of our protocol for other regimes to appendices. Furthermore, we demonstrate the protocol through the following: (1) Identifying a priority policy to distinguish which queue should send its data bits before others; (2) Defining the update procedure to employ priority policy at each node; (3) Defining control table to determine data movement rules between queues; (4) Introducing some flexible cases which provide a general rule to achieve the entire stable throughput region; (5) Proposing a pseudo-code to explain how to implement our general protocol. The details of our protocol when $\frac{3-\sqrt{5}}{2}\leq \delta < 0.5$, $0.5\leq \delta \leq 1$ and non-homogeneous settings are presented in Appendices~\ref{apndx:A},~\ref{apndx:B}, and Section~\ref{sect:non-homogeneous}, respectively. We note that all the operations in our transmission protocol are linear.
\subsection{Constructing virtual Queues at the transmitters}
\label{Subsection:QI}
In queue-based transmissions, queues play the main role to handle the data bits. In each queue, we use a subscript $i$ to describe that the queue stores the data bits which belong to ${\sf Tx}_i$. In addition, all queues with subscript $i$ are only available at ${\sf Tx}_i$ (i.e. there is no data exchanging between transmitters). Below, we introduce some queues for data controlling:

$Q_{i}^{i|\emptyset}$ (Initial queue): It is the first queue where each bit of ${\sf Tx}_i$ goes in. It means the newly arrived bits at each time instant must be saved in $Q_{i}^{i|\emptyset}$;

$Q_{i}^{1,2|\emptyset}$(Common-interest bits): It contains data bits which belong to ${\sf Tx}_i$ and are helpful to both ${\sf Rx}_1$ and ${\sf Rx}_2$;

$Q_{i}^{i|\bar{i}}$(Side-information bits at the unintended receiver): It contains the data bits that belong to ${\sf Tx}_i$, needed by ${\sf Rx}_i$, but have become available to ${\sf Rx}_{\bar{i}}$ without any interference. Thus, ${\sf Rx}_{\bar{i}}$ can use these bits as side information in future process;

$Q_{i}^{\bar{i}|i}$(Side-information bits at the intended receiver): It contains the bits that belong to ${\sf Tx}_{i}$ and are helpful to ${\sf Rx}_{\bar{i}}$ even though they are available at ${\sf Rx}_i$ without any interference. Hence, ${\sf Rx}_i$ will be able to employ them as side information;

$Q_{i}^{c_{1}}$ (Special queue): If at the time of transmission, the bit of $\sf{Tx}_i$ is either in $Q_{i}^{i|\emptyset}$ or $Q_{i}^{1,2|\emptyset}$, $i=1,2$, and if all links are on (see Table~\ref{table:flexible queues} as an example), then the bit will join $Q_{i}^{c_{1}}$.


$Q_{i}^{F}$: This queue contains the data bit that belongs to ${\sf Tx}_i$ and satisfies the following conditions:
(1) ${\sf Rx}_i$ obtains this data bit without interference or receives sufficient number of linearly independent equations to decode it, and (2) the data bit is not considered to be helpful to ${\sf Rx}_{\bar{i}}$. We note 
the bits of $Q_{i}^{F}$ will not be retransmitted.

When needed, we will explain when transmitted bits may remain in their respective queues (\emph{i.e.} no status change).

\subsection{Delivering Options}
\label{Subsection:DO}
In this part, we identify two delivering opportunities based on prior values of channel gain $S_{ij}(i,j=1,2)$. Later, we will use these opportunities to deliver data bits to the intended destinations by exploiting the side information created due to channel statistics.

\textbf{Creating common-interest bits:}
In this category, ${\sf Tx}_i$ exploits the delayed CSIT to deliver its data bits to ${\sf Rx}_i$. We use Fig.~\ref{Fig:intend}(a) as an example to clarify this point. In Fig.~\ref{Fig:intend}(a), ${\sf Tx}_1$ and ${\sf Tx}_2$ send out $a_{1}$ and $b_{1}$, respectively, at the same time to intended receivers. We assume all channels are on at time $t$ and both receivers obtain $a_{1}\oplus b_{1}$. We observe that if ${\sf Tx}_1$ delivers $a_{1}$ to both receivers (in the future), ${\sf Rx}_1$ attains $a_{1}$, and ${\sf Rx}_2$ removes $a_{1}$ from $a_{1}\oplus b_{1}$ and decodes $b_{1}$ easily. As a result, $a_{1}$ is considered as a common interest bit and joins $Q_{1}^{1,2|\emptyset}$ and it is not needed to retransmit $b_{1}$ and $b_{1}$ joins $Q_{2}^{F}$. In this case,~$b_{1}$ is considered as a virtually delivered bit which means $b_{1}$ can be decoded by ${\sf Rx}_2$ if $a_{1}$ is delivered to ${\sf Rx}_2$.

\begin{figure}[h!]
  \centering
  \includegraphics[trim = 0mm 0mm 0mm 0mm, clip, scale=3.8, width=0.48\linewidth]{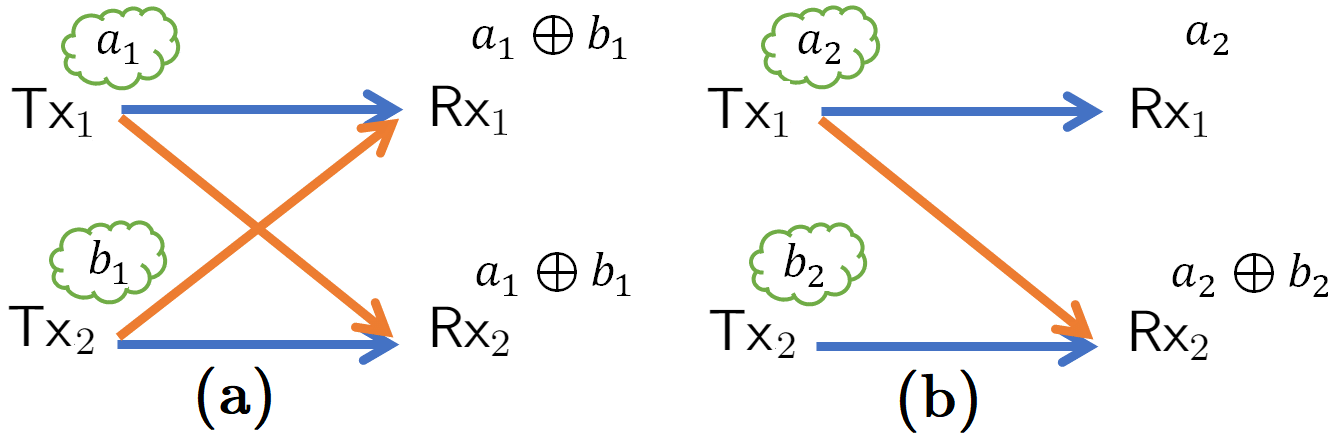}
  \caption{\it 
  (a) ${\sf Rx}_1$ and ${\sf Rx}_2$ receive $a_{1} \oplus b_{1}$ at time $t$. We consider $a_{1}$ as a common interest bit and move it to $Q_{1}^{1,2|\emptyset}$ and move $b_{1}$ to $Q_{2}^{F}$ as a virtually delivered bit. (b) ${\sf Rx}_1$ gets $a_{2}$ directly from ${\sf Tx}_1$ and ${\sf Rx}_2$ obtains $a_{2} \oplus b_{2}$. In this case, the optimal choice is to retransmit $a_{2}$ and consider $b_{2}$ as a virtually delivered bit in $Q_{2}^{F}$.} 
  \label{Fig:intend}
  \vspace{-4mm}
\end{figure}

\textbf{Exploiting available side information:}
In this category, we describe how ${\sf Tx}_i$ can exploit the available side information at ${\sf Rx}_{\bar{i}}$ to deliver data bits more efficiently. Assume as an example the scenario of Fig.~\ref{Fig:intend} (b) where $a_{2}$ and $b_{2}$ are transmitted by ${\sf Tx}_1$ and ${\sf Tx}_2$, respectively. In this case, ${\sf Rx}_1$ gets $a_{2}$ and decodes it successfully, and ${\sf Rx}_2$ obtains $a_{2} \oplus b_{2}$. We observe that there are two options for transmitters to deliver $b_{2}$ to ${\sf Rx}_2$ : 1) ${\sf Tx}_2$ retransmits $b_{2}$ to ${\sf Rx}_2$ which can create additional interference at ${\sf Rx}_1$; 2) ${\sf Tx}_1$ retransmits $a_{2}$ to ${\sf Rx}_2$ which provides a more efficient solution as it does not interfere with ${\sf Rx}_1$. Therefore, ${\sf Tx}_1$ can help ${\sf Rx}_2$ by transmitting $a_{2}$ ($a_{2}$ is saved in $Q_{1}^{2|1}$) and $b_{2}$ is delivered virtually since ${\sf Tx}_2$ does not retransmit it.



\begin{remark}
\label{remark:fault_example}
To illustrate the main error in the communication protocol of~\cite{pan2016stability}, we consider the example illustrated in Fig.~\ref{Fig:fault_example} which happens during four time instants $t_1$, $t_2$, $t_3$ and $t_4$. Suppose ${\sf Tx}_1$ and ${\sf Tx}_2$ send out $a_{1}$ and $b_{1}$, respectively, and only $S_{11}[t]$ and $S_{21}[t]$ are on at time instant $t_1$ (i.e. ${\sf Rx}_1$ receives $a_{1}\oplus b_{1}$). Similarly, assume ${\sf Tx}_1$ and ${\sf Tx}_2$ send out $a_{2}$ and $b_{2}$, respectively, and $S_{22}[t]$ and $S_{12}[t]$ are on at time instant $t_2$. At time instant $t_3$, consider ${\sf Tx}_1$ and ${\sf Tx}_2$ transmit $a_{3}$ and $b_{3}$, respectively, and $S_{11}[t]$, $S_{12}[t]$, and $S_{22}[t]$ are on. Finally, assume ${\sf Tx}_1$ and ${\sf Tx}_2$ send out $a_{4}$ and $b_{4}$, respectively, and $S_{11}[t]$, $S_{21}[t]$ and $S_{22}[t]$ are on at time instant $t_4$. According to~\cite{pan2016stability}'s approach, $a_{1}\rightarrow Q_{1}^{F}$, $b_{1}\rightarrow Q_{2}^{2|1}$, $a_{2}\rightarrow Q_{1}^{1|2}$, $b_{2}\rightarrow Q_{2}^{F}$, $a_{3}\rightarrow Q_{1}^{2|1}$, $b_{3}\rightarrow Q_{2}^{F}$, $a_{4}\rightarrow Q_{1}^{F}$ and $b_{4}\rightarrow Q_{2}^{1|2}$ and it is sufficient to deliver $b_{1}\oplus b_{4}$ and $a_{2}\oplus a_{3}$ to both receivers to recover \{$a_{1},a_{2},a_{3},a_{4}$\} and \{$b_{1},b_{2},b_{3},b_{4}$\} by ${\sf Rx}_1$ and ${\sf Rx}_2$, respectively. However, we observe proposed decoding method does not work properly. For example, if ${\sf Rx}_1$ receives $b_{1}\oplus b_{4}$ and $a_{2}\oplus a_{3}$, it will decode $a_{2}$ from $a_{2}\oplus a_{3}$ but there is no way to decode $a_{1}$ and $a_{4}$ for ${\sf Rx}_1$. A similar statement holds for ${\sf Rx}_2$ which cannot decode $b_{2}$ and $b_{3}$. According to our strategy which we will discuss shortly, transmitted bits should join the following queues: $a_{1}\rightarrow Q_{1}^{F}$, $b_{1}\rightarrow Q_{2}^{1,2|\emptyset}$, $a_{2}\rightarrow Q_{1}^{1,2|\emptyset}$, $b_{2}\rightarrow Q_{2}^{F}$,  $a_{3}\rightarrow Q_{1}^{2|1}$, $b_{3}\rightarrow Q_{2}^{F}$, $a_{4}\rightarrow Q_{1}^{F}$, and $b_{4}\rightarrow Q_{2}^{1|2}$. In this case, it is enough to deliver $b_{1}$ and $a_{2}$ to both receivers,  $a_{3}$ to ${\sf Rx}_2$, and $b_{4}$ to ${\sf Rx}_1$.
\end{remark}
\begin{figure}[h!]
  \centering
  \includegraphics[trim = 0mm 0mm 0mm 0mm, clip, scale=5.5, width=0.9\linewidth]{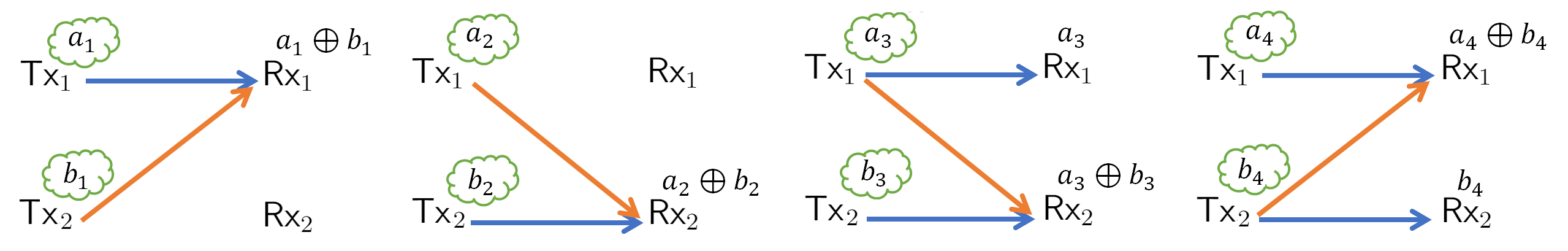}
  \caption{\it An example of decoding problem in~\cite{pan2016stability} that claims delivering $b_{1}\oplus b_{4}$ and $a_{2}\oplus a_{3}$ to both receivers is enough to decode the desired bits. However, ${\sf Rx}_1$ cannot decode $a_{1}$ and $a_{4}$ for example.\label{Fig:fault_example}}
  \vspace{-5mm}
\end{figure}

We denote the status of bits at the beginning and the end of each time instant by origin and destination queues, respectively. For simplicity, we use the decimal representation of the four-bit $S[t]$, which we refer to as the Situation Number (SN) as in Table~\ref{table:SN_numbers}, and use two queues next to each other separated by commas to indicate the queues (origin or destination) at both transmitters.

\begin{table}[ht]
\caption{Map $S[t]$ to Situation Number (SN)}
\centering
\begin{tabular}{|c|c|c|c|c|c|c|c|}
\hline
SN & $S[t]$ & SN & $S[t]$ & SN & $S[t]$ & SN & $S[t]$ \\[0.5ex]

\hline \hline
$1$ & $[1,1,1,1]$ & $5$ & $[1,0,0,0]$ & $9$ & $[0,0,1,0]$ & $13$ & $[0,1,0,0]$ \\
\hline
$2$ & $[1,0,1,1]$ & $6$ & $[1,0,0,1]$ & $10$ & $[0,1,1,0]$ & $14$ & $[0,0,0,1]$ \\
\hline
$3$ & $[1,1,1,0]$ & $7$ & $[1,1,0,0]$ & $11$ & $[0,0,1,1]$ & $15$  & $[0,1,0,1]$ \\
\hline
$4$ & $[1,0,1,0]$ & $8$ & $[1,1,0,1]$ & $12$ & $[0,1,1,1]$ & $16$ & $[0,0,0,0]$ \\[1ex]  
\hline
\end{tabular}
\label{table:SN_numbers}
\vspace{-8mm}
\end{table}
\subsection{XOR Opportunities}
\label{Subsection:XOR_COM}
In this part, we discuss XOR opportunities to construct common interest bits based on the constraints explained in Section~\ref{Subsection:DO}. We note that the data bits leave their origin queues immediately after XOR combination happens. We show that three XOR options exist as below:

$\bullet$ \textbf{XOR between $\mathbf{Q_{i}^{c_{1}}}$ and $\mathbf{Q_{i}^{i|\bar{i}}}$}:

Suppose $a_{1}$ and $b_{1}$ are available in $Q_1^{1,2|\emptyset},Q_2^{2|\emptyset}$ and SN-1 happens and in another time instant, $a_{2}$ and $b_{2}$ are the available bits in $Q_1^{1|\emptyset},Q_2^{2|\emptyset}$ and SN-15 occurs (as Fig.~\ref{Fig:model_1}). According to the definition of $Q_i^{c_1}$, we consider $a_1$ and $b_1$ as communication bits. Also, $a_2$ and $b_2$ are received by unintended receivers and ${\sf Rx}_{\bar{i}}$ can consider them as side information ($a_{2}\rightarrow Q_{1}^{1|2}$ and $b_{2}\rightarrow Q_{2}^{2|1}$). In this part, we show that delivering $a_{1}\oplus a_{2}$ and $b_{1}\oplus b_{2}$ to both receivers leads to deliver $\{a_{1},a_{2}\}$ and $\{a_{1},b_{1},b_{2}\}$ to ${\sf Rx}_1$ and ${\sf Rx}_2$, respectively. We describe the details of decoding procedure at ${\sf Rx}_1$ as an example. If $a_{1}\oplus a_{2}$ and $b_{1}\oplus b_{2}$ are delivered to ${\sf Rx}_1$, it uses $b_{2}$ and $b_{1}\oplus b_{2}$ to decode $b_{1}$ and then employs $b_{1}$ to get $a_{1}$ from $a_{1}\oplus b_{1}$ and uses $a_{1}$ to decode $a_{2}$ from $a_{1}\oplus a_{2}$. Similarly, ${\sf Rx}_2$ decodes its desired data bits. Hence, combining $Q_{i}^{c_{1}}$ and $Q_{i}^{i|\bar{i}}$ is a XOR model in our network. At this time, ${\sf Tx}_i$ omits the bits from $Q_{i}^{c_{1}}$ and $Q_{i}^{i|\bar{i}}$ and puts XORed bit in $Q_{i}^{1,2|\emptyset}, i\in \{1,2\}$.

\begin{figure}[h!]
  \centering
  \includegraphics[trim = 0mm 0mm 0mm 0mm, clip, scale=4, width=0.45\linewidth]{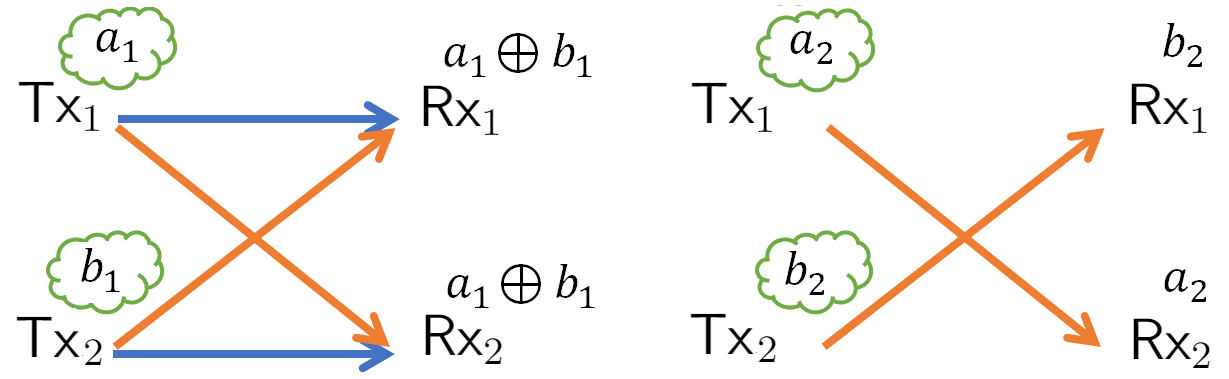}
  \caption{\it Consider SN-1 and SN-15 happen at two different time instants when $Q_1^{1,2|\emptyset},Q_2^{2|\emptyset}$ and $Q_1^{1|\emptyset},Q_2^{2|\emptyset}$ are the origin queues, respectively. It is enough to deliver $a_{1}\oplus a_{2}$ and $b_{1}\oplus b_{2}$ to both receivers. Thus, combining $Q_{i}^{c_{1}}$ and $Q_{i}^{i|\bar{i}}$ is a XOR opportunity.  \label{Fig:model_1}}
\end{figure}

$\bullet$ \textbf{XOR between $\mathbf{Q_{i}^{c_{1}}}$ and $\mathbf{Q_{i}^{\bar{i}|i}}$}:

Assume $a_{1}$ and $b_{1}$ are available in $Q_1^{1,2|\emptyset},Q_2^{2|\emptyset}$ and SN-1 happens and in two other time instants, $\{a_{2}, b_{2}\}$ and $\{a_{3}, b_{3}\}$ are the available bits in $Q_1^{1|\emptyset},Q_2^{2|\emptyset}$ and SN-2 and SN-13 occur, respectively. Based on delivering options in Section~\ref{Subsection:DO} and definition of $Q_i^{c_1}$, we find that $a_{1}\rightarrow Q_{1}^{c_1}$, $b_{1}\rightarrow Q_{2}^{c_1}$, $a_{2}\rightarrow Q_{1}^{2|1}$, $b_{2}\rightarrow Q_{2}^{F}$ $a_{3}\rightarrow Q_{1}^{F}$, and $b_{3}\rightarrow Q_{2}^{1|2}$. We indicate that it is enough to deliver $a_{1}\oplus a_{2}$ and $b_{1}\oplus b_{3}$ to both receivers. At this time, we describe decoding procedure for ${\sf Rx}_2$ as an instance and it is similar for ${\sf Rx}_1$. In Fig.~\ref{Fig:model_2}, ${\sf Rx}_2$ recovers $b_{1}$ by removing $b_{3}$ from $b_{1}\oplus b_{3}$ and then uses $b_{1}$ to decode $a_{1}$ from linear combination in SN-1. Furthermore, ${\sf Rx}_2$ obtains $a_{2}$ from $a_{1}\oplus a_{2}$ and uses $a_{2}$ to decode $b_{2}$ from $a_{2}\oplus b_{2}$. As a result, $Q_{i}^{c_{1}}$ and $Q_{i}^{\bar{i}|i}$ is a XOR opportunity and bits leave $Q_{i}^{c_{1}}$ and $Q_{i}^{\bar{i}|i}$ and XORed bits move to $Q_{i}^{1,2|\emptyset}, i\in \{1,2\}$.
\begin{figure}[h!]
  \centering
  \includegraphics[trim = 0mm 0mm 0mm 0mm, clip, scale=9.5, width=0.7\linewidth]{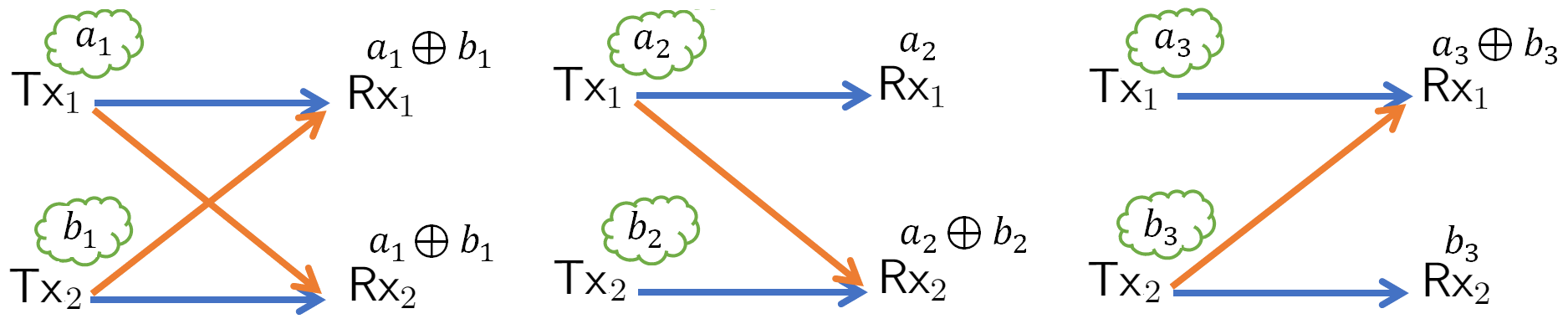}
  \caption{\it Suppose SN-1, SN-2 and SN-3 occur at three different time instants. We show that by delivering $a_{1}\oplus a_{2}$ and $b_{1}\oplus b_{3}$ to both receivers, ${\sf Rx}_i, i\in \{1,2\}$ can decode its desired bits. Therefore, $Q_{i}^{c_{1}}$ and $Q_{i}^{\bar{i}|i}$ create a XOR opportunity. \label{Fig:model_2}}
\end{figure}

$\bullet$ \textbf{XOR between $\mathbf{Q_{i}^{i|\bar{i}}}$ and $\mathbf{Q_{i}^{\bar{i}|i}}$}:

We describe this XOR combination through an example of decoding procedure for ${\sf Rx}_1$. Imagine SN-2 and SN-14 happen at two time instants when $Q_1^{1|\emptyset},Q_2^{2|\emptyset}$ are the origin queues (as in Fig.~\ref{Fig:model_3}). According to delivering option in Section~\ref{Subsection:DO} and definition of $Q_i^{i|\bar{i}}$, we find that $a_{1}\rightarrow Q_{1}^{2|1}$, $b_{1}\rightarrow Q_{2}^{F}$, $a_{2}\rightarrow Q_{1}^{1|2}$, and $b_{2}\rightarrow Q_{2}^{2|\emptyset}$. We show that by delivering $a_{1}\oplus a_{2}$ to both receivers, ${\sf Rx}_1$ obtains $a_{2}$ by removing $a_{1}$ from $a_{1}\oplus a_{2}$ and ${\sf Rx}_2$ obtains $a_{1}$ from $a_{1}\oplus a_{2}$ and uses $a_{1}$ to decode $b_{1}$ from  $a_{1}\oplus b_{1}$. Thus, $Q_{i}^{i|\bar{i}}$ and $Q_{i}^{\bar{i}|i}$ can be considered as a XOR combination, and $a_{1}$ and $a_{2}$ leave $Q_{1}^{1|2}$ and $Q_{1}^{2|1}$, and XOR value of them joins $Q_{1}^{1,2|\emptyset}$. Similar statement holds for ${\sf Rx}_2$ when SN-3 and SN-13 happen at two time instants.
\begin{figure}[h!]
  \centering
  \includegraphics[trim = 0mm 0mm 0mm 0mm, clip, scale=4.2, width=0.42\linewidth]{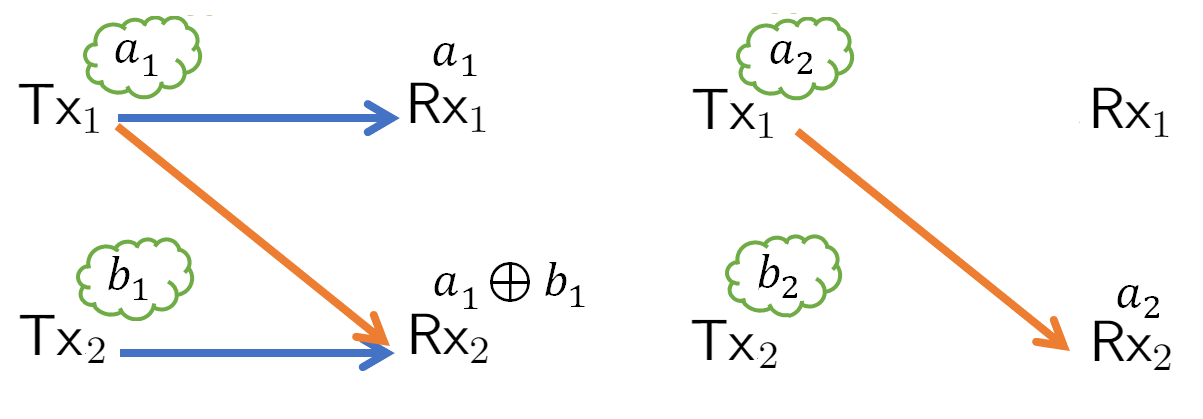}
  \caption{\it Assume SN-2 and SN-14 happen at two time instants. In this case, by delivering $a_{1}\oplus a_{2}$ to both receivers, ${\sf Rx}_1$ obtains $a_{1}$ and $a_{2}$ and it makes a XOR combination between $Q_{i}^{i|\bar{i}}$ and $Q_{i}^{\bar{i}|i}$. \label{Fig:model_3}}
\end{figure}

To use the benefits of delivering options and XOR combinations, the transmitted bits should be tracked at each node. We will describe how each node can track the transmitted bits without any confusion in Section~\ref{subsect:Transparent_CT}.
\subsection{Priority Policy}
\label{Subsection:P}

In non-stochastic queue-based transmissions, all data bits are available in $Q_{i}^{i|\emptyset}$ at time instant $t=0$ and transmission strategy is divided into two phases~\cite{vahid2014capacity}. During the first phase, transmitters send their data bits and find some combining opportunities between data bits in the second phase. However, in stochastic arrivals, the data bits are arriving during communication time. As we discussed in Section~\ref{Subsection:QI}, we have six different virtual queues at each transmitter which result in 36 different combinations as origin queues in our network. We know that one of them is $Q_1^F,Q_2^F$ which cannot be considered as the origin queues because the bits in these queues are already delivered. Therefore, we have 35 different active origin queues and it is essential to determine which data bits have higher priority to send at each time instant. Hence, we assign a priority policy (see Table~\ref{table:priority}) to determine which bits should be sent before others. Since common interest bits include more information rather than other bits (i.e. they help both receivers to decode their desired bits) $Q_{i}^{1,2|\emptyset}$ has highest priority. $Q_{i}^{i|\emptyset}$ contains new arrived bits at ${\sf Tx}_i$ and occupies second place. Usually, ${\sf Tx}_i$ should deliver desired bits to ${\sf Rx}_i$ and some of the times, it is beneficial to deliver the interfering bit to ${\sf Rx}_{\bar{i}}$. Therefore, we assign third and fourth priority to $Q_{i}^{i|\bar{i}}$ and $Q_{i}^{\bar{i}|i}$, respectively. Next place is occupied by $Q_{i}^{c_{1}}$ which contains the bits which provide XOR options for transmitters. The main role of $Q_{i}^{c_{1}}$ is providing XOR opportunities 
that simultaneously deliver data bits to both receivers.
\begin{table}[ht]
\caption{priority policy indicates which bits should leave their origin queues at time instant $t$}
\centering
\begin{tabular}{|c|c|l|}
\hline
Priority & Queue & \hspace{1.35in} Policy \\[0.5ex]

\hline \hline
1 & $Q_{i}^{1,2|\emptyset}$ & If there is a bit in $Q_{i}^{1,2|\emptyset}$  \\
2 & $Q_{i}^{i|\emptyset}$ & If $Q_{i}^{1,2|\emptyset}$ is empty and there is a bit in $Q_{i}^{i|\emptyset}$  \\
3 & $Q_{i}^{i|\bar{i}}$ & If $Q_{i}^{1,2|\emptyset}$ and $Q_{i}^{i|\emptyset}$ are empty and there is a bit in $Q_{i}^{i|\bar{i}}$  \\
4 & $Q_{i}^{\bar{i}|i}$ & If $Q_{i}^{1,2|\emptyset}$, $Q_{i}^{i|\emptyset}$ and $Q_{i}^{i|\bar{i}}$ are empty and there is a bit in $Q_{i}^{\bar{i}|i}$  \\
5 & $Q_{i}^{c_{1}}$ & If all other queues are empty and only there is a bit in $Q_{i}^{c_{1}}$ \\[1ex]  
\hline
\end{tabular}
\label{table:priority}
\end{table}
We note that each node needs the number of data bits in all queues to use Table~\ref{table:priority}. However, only ${\sf Tx}_i$ knows the number of available bits in $Q_i^{i|\emptyset}$. Thus, in next section, we define update procedure to help all nodes learn the number of available bits in initial queues.
\subsection{Update}
\label{subsect:update}

The bits arrived during the interval between two consecutive updates will not be transmitted until the next update happens. The new update will enable all other nodes to follow the status of the bits correctly. Each transmitter runs its update every $L$ time instants. For reasonably large values of $L$, using concentration results, we can show the number of newly arrived bits at transmitter $i$ is close to the expected value, $\lambda_i L$. Since $\lambda_i \leq 1$, the overhead associated with the updates is upper bounded by $\log(L)/L$ bits per channel use. Further, we note that each channel use accommodates a bit in the forward channel that represents a data packet of potentially thousands of bytes. In essence, the overhead can be made arbitrarily small and will not affect the rates. We also note that $L$ would be a modest finite number and the asymptotic information-theoretic schemes, such as the one in~\cite{vahid2014capacity}, cannot be used in this case. In the description of the transmission protocol, we ignore the overhead associated with the updates.

Also, as we will describe later in Section~\ref{subsect:find_stability_zones}, ${\sf Tx}_i, i=1,2$ notifies other nodes of changes in its arrival rate during the update procedure. Similar to the above, we can show that the overhead associated with this notification is negligible.

\subsection{Control Table}
\label{Subsection:CT}
Stochastic data arrival introduces new challenges when compared to the non-stochastic model as we discussed. In this section, we define a control table to describe how data bits move from the origin to the destination queues when $0\leq \delta < \frac{3-\sqrt{5}}{2}$ based on different SNs. There are 35 possible combinations of the origin queues and 16 SNs. Hence, the control table would be of size $35\times16$ (available online at \cite{key}) and describes the movement rules. We will describe the details of the control table when $Q_1^{1,2|\emptyset},Q_2^{2|\emptyset}$ are the origin queues as an example in Appendix~\ref{apndx:ex_CT}. Here, we propose a general rule: (1) if there is only one choice for destination queues, then that option will be chosen, and (2) when more than one choice is available, the rule selects the destination queues (called destination queues that provide more XOR opportunities) that result in a higher number of bits available to create XOR combinations described in Section~\ref{Subsection:XOR_COM}. Further, there are cases according to Table~\ref{table:flexible queues} that involve flexible choice of destination queues.
\begin{table}[hb]
\caption{Flexible destination queues when $0 \leq \delta \leq \frac{3-\sqrt{5}}{2}$ and each bit is at the origin queue of its transmitter.}

\centering
\begin{tabular}{|c|c|c|c|c|c|c|c|c|c|}
\hline
Flexible ID & SN & $P[t]=A$& $P[t]=C$& $P[t]=B$ & Flexible ID & SN & $P[t]=A$& $P[t]=C$& $P[t]=B$\\[0.5ex]

\hline \hline
$F_1$ & 1 & $Q_{1}^{F},Q_{2}^{1,2|\emptyset}$ & $Q_{1}^{c_1},Q_{2}^{c_1}$& $Q_{1}^{1,2|\emptyset},Q_{2}^{F}$  & $F_5$ & 2 & $Q_{1}^{F},Q_{2}^{2|1}$ & $Q_{1}^{F},Q_{2}^{2|1}$ & $Q_{1}^{2|1},Q_{2}^{F}$ \\[1ex] 
\hline
$F_2$ & 1 & $Q_{1}^{F},Q_{2}^{2|1}$ & $Q_{1}^{F},Q_{2}^{2|1}$ & $Q_{1}^{2|1},Q_{2}^{F}$& $F_6$ & 3 & $Q_{1}^{F},Q_{2}^{1|2}$ & $Q_{1}^{1|2},Q_{2}^{F}$ & $Q_{1}^{1|2},Q_{2}^{F}$\\
\hline
$F_3$ & 1 & $Q_{1}^{F},Q_{2}^{1,2|\emptyset}$ & $Q_{1}^{c_1},Q_{2}^{c_1}$ & $Q_{1}^{1,2|\emptyset},Q_{2}^{F}$& $F_7$ & 8 &  $Q_{1}^{F},Q_{2}^{1|2}$ & $Q_{1}^{1|2},Q_{2}^{F}$ & $Q_{1}^{1|2},Q_{2}^{F}$  \\
\hline
$F_4$ & 1 & $Q_1^F,Q_{2}^{1|2}$ & $Q_{1}^{1|2},Q_{2}^{F}$ & $Q_{1}^{1|2},Q_{2}^{F}$ & $F_8$ & 12 & $Q_{1}^{F},Q_{2}^{2|1}$ & $Q_{1}^{F},Q_{2}^{2|1}$ & $Q_{1}^{2|1},Q_{2}^{F}$ \\[1ex]
\hline
\end{tabular}
\label{table:flexible queues}
\end{table}
In Table~\ref{table:flexible queues}, we define random variable $P[t] \in \{A,B,C\}$ to explain how the flexible destination queues are determined
at time instant $t$. Specifically, we assign the name of different cornerpoints of the stability region as the possible events of $P[t]$ to indicate that Table~\ref{table:flexible queues} follows the rule of the same name cornerpoint to determine the flexible destination queues. This way, we find different values of $P[t]$ as follows: $P[t]=A:$ We know $\lambda_1>\lambda_2$ at cornerpoint A. 
Thus, we determine the destination queues such that ${\sf Tx}_2$ helps ${\sf Tx}_1$ delivering the intended bits to ${\sf Rx}_1$; $P[t]=C:$ In \cite{vahid2014capacity}, authors show that when $0 \leq \delta \leq \frac{3-\sqrt{5}}{2}$, the number of bits in $Q_i^{\bar{i}|i}$ is greater than or equal to the number of bits in $Q_i^{i|\bar{i}}$. Hence, to avoid congestion, we pick the queues that store fewer bits in $Q_i^{\bar{i}|i}$; $P[t]=B:$ We use the same strategy as ${P[t]=A}$ by changing the users' labels. Below, we illustrate the details of flexible destination queues when bits $a$ and $b$ are in the origin queues and $0 \leq \delta \leq \frac{3-\sqrt{5}}{2}$. We will describe the probability of each outcome later.

\noindent $\diamond$ $\text{F}_1$: This case occurs when $Q_1^{1|\emptyset},Q_2^{2|\emptyset}$ are the origin queues and SN-1 happens. At this time, both receivers obtain $a\oplus b$ from the transmitters. Here, there are three choices for destination queues as: $P[t]=A:$ In this case, if ${\sf Tx}_2$ delivers $b$ to both receivers, ${\sf Rx}_2$ gets the desired bit and ${\sf Rx}_1$ decodes $a$ using $b$ and $a\oplus b$. Therefore, $a$ goes to $Q_1^F$ and $b$ saves in $Q_2^{1,2|\emptyset}$. This way, ${\sf Rx}_1$ uses side information from ${\sf Tx}_2$ to decode its desired bit; $P[t]=C:$ Based on the definition of $Q_i^{i|\emptyset}$ in Section~\ref{Subsection:QI}, $a \rightarrow Q_1^{c_1}$ and $b \rightarrow Q_2^{c_1}$. However, we know: 1) In Table~\ref{table:flexible queues}, SN-1 happens most frequently since $0\leq \delta \leq \frac{3-\sqrt{5}}{2}$; 2) $Q_i^{c_1}$ has the lowest priority in Table~\ref{table:priority} and transmitters only need to deliver one of $a$ or $b$ to both receivers. As a result, in half of the communication time, we store the data bit of $Q_i^{c_1}$ in $Q_i^{1,2|\emptyset}$ and move its data bit to $Q_i^F$ in the other half; $P[t]=B:$ By changing the users' labels in $P[t]=A$, we find $Q_1^{1,2|\emptyset},Q_2^F$ as the destination queues;

\noindent $\diamond$ $\text{F}_2$: In this part, the origin queues is one of $\{Q_1^{1|\emptyset},Q_2^{2|1}\}$, $\{Q_1^{2|1},Q_2^{2|\emptyset}\}$, $\{Q_1^{2|1},Q_2^{2|1}\}$, $\{Q_1^{2|1},Q_2^{1,2|\emptyset}\}$, or $\{Q_1^{1,2|\emptyset},Q_2^{2|1}\}$ and SN-1 occurs. We describe the details for $Q_1^{1|\emptyset},Q_2^{2|1}$ as an example. In this case, both receivers obtain $a\oplus b$ which leads to two choices for destination queues as follows: $P[t]= A~\text{\it or}~C:$ In this choice, if ${\sf Tx}_2$ delivers $b$ to its intended receiver, ${\sf Rx}_2$ obtains its desired bit. Then, ${\sf Rx}_1$ uses $a\oplus b$ and side information in $Q_2^{2|1}$ to decode $a$. As the result, $Q_1^{F},Q_2^{2|1}$ are the destination queues. We note that $Q_1^{F},Q_2^{2|1}$ are considered as destination queues for $P[t]= C$ since they do not increase the bits in $Q_i^{\bar{i}|i}$; $P[t]=B:$ In this case, $\lambda_2>\lambda_1$ and if ${\sf Tx}_1$ delivers $a$ to unintended receiver, then ${\sf Rx}_2$ applies $a$ to $a\oplus b$ and attains the desired bit. Then, ${\sf Rx}_1$ uses the side information in $Q_2^{2|1}$ to decode $a$, and $Q_1^{2|1},Q_2^{F}$ are destination queues;

\noindent $\diamond$ $\text{F}_3$: The destination queues of this case are similar to $\text{F}_1$ where $a$ and $b$ are available in one of $\{Q_1^{1|\emptyset},Q_2^{1,2|\emptyset}\}$, $\{Q_1^{1,2|\emptyset},Q_2^{2|\emptyset}\}$ or $\{Q_1^{1,2|\emptyset},Q_2^{1,2|\emptyset}\}$ and SN-1 shows the channel realizations;

\noindent $\diamond$ $\text{F}_4$: This case demonstrates the flexible destination queues when desired bits are available in one of origin queues $\{Q_1^{1|\emptyset},Q_2^{1|2}\}$, $\{Q_1^{1|2},Q_2^{2|\emptyset}\}$, $\{Q_1^{1|2},Q_2^{1|2}\}$, $\{Q_1^{1|2},Q_2^{1,2|\emptyset}\}$ or $\{Q_1^{1,2|\emptyset},Q_2^{1|2}\}$ and SN-1 happens. The procedure of this case is similar to $\text{F}_2$ by changing the users' labels;

\noindent $\diamond$ $\text{F}_5$: Assume that $\{Q_1^{1|\emptyset},Q_2^{2|1}\}$ or $\{Q_1^{1,2|\emptyset},Q_2^{2|1}\}$ are the origin queues and SN-2 occurs. In this case, ${\sf Rx}_1$ receives $a$ and ${\sf Rx}_2$ gets $a\oplus b$. Here, we illustrate the details for $Q_1^{1,2|\emptyset},Q_2^{2|1}$ as an example. We find two choices for destination queues as follows: $P[t]= B:$ In this choice, if ${\sf Tx}_1$ delivers $a$ to its unintended receiver, ${\sf Rx}_2$ gets $a$ easily. Then, ${\sf Rx}_2$ uses $a$ and $a\oplus b$ to decodes $b$. Also, ${\sf Rx}_1$ obtains $a$ directly from ${\sf Tx}_1$. As the result, $a \rightarrow Q_1^{2|1}$ and $b\rightarrow Q_2^{F}$; $P[t]= A~\text{\it or}~C:$ By delivering $b$ to ${\sf Rx}_2$, it applies $b$ to $a\oplus b$ and gets $a$ and ${\sf Rx}_1$ receives $a$ directly from ${\sf Tx}_1$. Therefore, $Q_1^{F}, Q_2^{2|1}$ are the destination queues which do not increase the number of bits in $Q_i^{\bar{i}|i}$. As a result, $Q_1^{F}, Q_2^{2|1}$ are the destination queues for both $P[t]= A$ and $P[t]= C$;

\noindent $\diamond$ $\text{F}_6$: Suppose data bits are available in $\{Q_1^{1|2},Q_2^{2|\emptyset}\}$ or $\{Q_1^{1|2},Q_2^{1,2|\emptyset}\}$ and SN-3 happens. The procedure of this case is similar to $\text{F}_5$ by changing the users' labels;

\noindent $\diamond$ $\text{F}_7$: If $\{Q_1^{1|\emptyset},Q_2^{1|2}\}$ or $\{Q_1^{1,2|\emptyset},Q_2^{1|2}\}$ are the origin queues and SN-8 occurs. We describe the details when $Q_1^{1|\emptyset},Q_2^{1|2}$ are the origin queues as an example. In this case, ${\sf Rx}_1$ gets $a\oplus b$ and ${\sf Rx}_2$ receives $a$. Here are two choices for destination queues: $P[t]= A:$ By delivering $b$ to ${\sf Rx}_1$, it uses $a\oplus b$ and $b$ to decode $a$. Hence, $a\rightarrow Q_1^{F}$ and $b\rightarrow Q_2^{1|2}$; $P[t]= B~\text{\it or}~C:$ If ${\sf Tx}_1$ delivers $a$ to ${\sf Rx}_1$, it gets $a$ and decodes $b$ using $a$ and $a\oplus b$. Hence, $a \rightarrow Q_1^{1|2}$ and $b\rightarrow Q_2^{F}$ which do not increase the bits in $Q_i^{\bar{i}|i}$;

\noindent $\diamond$ $\text{F}_8$: This case contains $\{Q_1^{2|1},Q_2^{2|\emptyset}\}$ or $\{Q_1^{2|1}, Q_2^{1,2|\emptyset}\}$ as the origin queues when SN-12 occurs, and the procedure of this case is similar to $\text{F}_7$ by changing the users' labels.
\subsection{Finding the corresponding cornerpoint in Table~\ref{table:flexible queues}}
\label{subsect:find_stability_zones}
In this part, we explain how our proposed transmission protocol generates $P[t]$. We define $p_A$, $p_B$ and $p_C$ as the probability of $A$, $B$ and $C$, respectively. 
It is clear that $0\leq p_A,p_B,p_C\leq 1$ and $p_A+p_B+p_C=1$. We use ${\lambda}_i^A$, ${\lambda}_i^B$, and ${\lambda}_i^C$ to denote the rate at cornerpoint A, B, and C at user $i$, respectively. To describe the above probabilities, we divide stable throughput region into five different zones (Fig.~\ref{Fig:zones}) as follows:
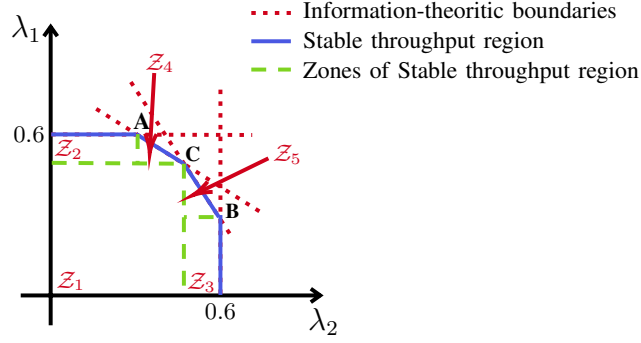
\begin{figure}[!ht]
  \centering

\tikzset{every picture/.style={line width=0.6pt}} 

\begin{tikzpicture}[x=0.6pt,y=0.6pt,yscale=-0.85,xscale=0.85]

\draw [line width=1.5]  (67.5,260.66) -- (289.64,260.66)(89.71,61.25) -- (89.71,282.82) (282.64,255.66) -- (289.64,260.66) -- (282.64,265.66) (84.71,68.25) -- (89.71,61.25) -- (94.71,68.25)  ;
\draw [color={rgb, 255:red, 208; green, 2; blue, 27 }  ,draw opacity=1 ][line width=1.5]  [dash pattern={on 1.69pt off 2.76pt}]  (215.62,107.9) -- (215.62,260.29) ;
\draw [color={rgb, 255:red, 208; green, 2; blue, 27 }  ,draw opacity=1 ][line width=1.5]  [dash pattern={on 1.69pt off 2.76pt}]  (90.14,141.13) -- (239.19,141.57) ;
\draw [color={rgb, 255:red, 208; green, 2; blue, 27 }  ,draw opacity=1 ][line width=1.5]  [dash pattern={on 1.69pt off 2.76pt}]  (149.19,102.24) -- (223.81,216.8) ;
\draw [color={rgb, 255:red, 208; green, 2; blue, 27 }  ,draw opacity=1 ][line width=1.5]  [dash pattern={on 1.69pt off 2.76pt}]  (123.5,122.18) -- (245.5,198.85) ;
\draw [color={rgb, 255:red, 74; green, 88; blue, 226 }  ,draw opacity=1 ][line width=1.5]    (215.62,201) -- (215.62,260.29) ;
\draw [color={rgb, 255:red, 74; green, 88; blue, 226 }  ,draw opacity=1 ][line width=1.5]    (153.19,140.9) -- (188.52,162.9) ;
\draw [color={rgb, 255:red, 74; green, 88; blue, 226 }  ,draw opacity=1 ][line width=1.5]    (188.52,162.9) -- (214.5,202.5) ;
\draw [color={rgb, 255:red, 126; green, 211; blue, 33 }  ,draw opacity=1 ][line width=1.5]  [dash pattern={on 5.63pt off 4.5pt}]  (90.52,162.24) -- (188.52,162.9) ;
\draw [color={rgb, 255:red, 126; green, 211; blue, 33 }  ,draw opacity=1 ][line width=1.5]  [dash pattern={on 5.63pt off 4.5pt}]  (188.52,162.9) -- (188.52,261.57) ;
\draw [color={rgb, 255:red, 126; green, 211; blue, 33 }  ,draw opacity=1 ][line width=1.5]  [dash pattern={on 5.63pt off 4.5pt}]  (154.19,140.9) -- (154.19,162.9) ;
\draw [color={rgb, 255:red, 126; green, 211; blue, 33 }  ,draw opacity=1 ][line width=1.5]  [dash pattern={on 5.63pt off 4.5pt}]  (210.52,202.5) -- (188.52,202.5) ;
\draw [color={rgb, 255:red, 74; green, 88; blue, 226 }  ,draw opacity=1 ][fill={rgb, 255:red, 74; green, 144; blue, 226 }  ,fill opacity=1 ][line width=1.5]    (90.14,141.13) -- (153.05,141.13) ;
\draw [color={rgb, 255:red, 208; green, 2; blue, 27 }  ,draw opacity=1 ][line width=1.5]    (251.19,158.9) -- (197.89,184.92) ;
\draw [shift={(195.19,186.24)}, rotate = 333.98] [color={rgb, 255:red, 208; green, 2; blue, 27 }  ,draw opacity=1 ][line width=1.5]    (14.21,-4.28) .. controls (9.04,-1.82) and (4.3,-0.39) .. (0,0) .. controls (4.3,0.39) and (9.04,1.82) .. (14.21,4.28)   ;
\draw [color={rgb, 255:red, 208; green, 2; blue, 27 }  ,draw opacity=1 ][fill={rgb, 255:red, 208; green, 2; blue, 27 }  ,fill opacity=1 ][line width=1.5]    (166.52,95.57) -- (163.36,151.91) ;
\draw [shift={(163.19,154.9)}, rotate = 273.22] [color={rgb, 255:red, 208; green, 2; blue, 27 }  ,draw opacity=1 ][line width=1.5]    (14.21,-4.28) .. controls (9.04,-1.82) and (4.3,-0.39) .. (0,0) .. controls (4.3,0.39) and (9.04,1.82) .. (14.21,4.28)   ;
\draw [color={rgb, 255:red, 208; green, 2; blue, 27 }  ,draw opacity=1 ][line width=1.5]  [dash pattern={on 1.69pt off 2.76pt}]  (238.52,51) -- (267.86,51) ;
\draw [color={rgb, 255:red, 74; green, 88; blue, 226 }  ,draw opacity=1 ][fill={rgb, 255:red, 74; green, 144; blue, 226 }  ,fill opacity=1 ][line width=1.5]    (238.52,72) -- (267.86,72) ;
\draw [color={rgb, 255:red, 126; green, 211; blue, 33 }  ,draw opacity=1 ][line width=1.5]  [dash pattern={on 5.63pt off 4.5pt}]  (238.52,93) -- (267.86,93) ;

\draw (60,133) node [anchor=north west][inner sep=0.75pt]  [font=\footnotesize]  {${\textstyle 0.6}$};
\draw (202,265) node [anchor=north west][inner sep=0.75pt]  [font=\footnotesize]  {${\textstyle 0.6}$};
\draw (149.2,123.17) node [anchor=north west][inner sep=0.75pt]  [font=\footnotesize] [align=left] {\textbf{A}};
\draw (187.33,148.24) node [anchor=north west][inner sep=0.75pt]  [font=\scriptsize] [align=left] {\textbf{C}};
\draw (217.33,190.24) node [anchor=north west][inner sep=0.75pt]  [font=\scriptsize] [align=left] {\textbf{B}};
\draw (92,240.9) node [anchor=north west][inner sep=0.75pt]  [font=\footnotesize]  {$\textcolor[rgb]{0.82,0.01,0.11}{\mathcal{Z}_{1}}$};
\draw (90.14,143) node [anchor=north west][inner sep=0.75pt]  [font=\footnotesize]  {$\textcolor[rgb]{0.82,0.01,0.11}{\mathcal{Z}_{2}}$};
\draw (189,241.67) node [anchor=north west][inner sep=0.75pt]  [font=\footnotesize]  {$\textcolor[rgb]{0.82,0.01,0.11}{\mathcal{Z}_{3}}$};
\draw (158,79) node [anchor=north west][inner sep=0.75pt]  [font=\footnotesize]  {$\textcolor[rgb]{0.82,0.01,0.11}{\mathcal{Z}_{4}}$};
\draw (252.67,147) node [anchor=north west][inner sep=0.75pt]  [font=\footnotesize]  {$\textcolor[rgb]{0.82,0.01,0.11}{\mathcal{Z}_{5}}$};
\draw (58,50.57) node [anchor=north west][inner sep=0.75pt]    {${\lambda}_1$};
\draw (278.67,268) node [anchor=north west][inner sep=0.75pt]    {${\lambda}_2$};
\draw (275,41) node [anchor=north west][inner sep=0.75pt]   [align=left] {{\footnotesize Information-theoritic boundaries}};
\draw (275,63) node [anchor=north west][inner sep=0.75pt]   [align=left] {{\footnotesize Stable throughput region}};
\draw (275,85) node [anchor=north west][inner sep=0.75pt]  [font=\tiny] [align=left] {{\footnotesize Zones of Stable throughput region}};
\end{tikzpicture}  
  \caption{Different zones of stable throughput region for homogeneous setting with $\delta=0.4$}.
  \label{Fig:zones}
  \vspace{-12mm}
\end{figure}

$\mathcal{Z}_1:$ This region in the convex hull of the following points:  $(0,0)$, $({\lambda}_1^C,0)$, $(0,{\lambda}_2^C)$ and $({\lambda}_1^C,{\lambda}_2^C)$. We note that under certain conditions, the cornerpoint C could be equal to $(1-\delta_{11}, 1-\delta_{22})$ in which case the entire region is covered by $\mathcal{Z}_1$;
    
$\mathcal{Z}_2:$ This zone is a convex hull which is confined by $({\lambda}_1^C,0)$, $({\lambda}_1^A,0)$,
$({\lambda}_1^A,{\lambda}_2^A)$ and $({\lambda}_1^C,{\lambda}_2^A)$. We use $p_A=1$ in this region since ${\lambda}_1>{\lambda}_2$ and arrival rates are not close to the boundary points between cornerpoint A and C;

$\mathcal{Z}_3:$ This zone contains a convex hull which is surrounded by $(0,{\lambda}_2^C)$, $(0,{\lambda}_2^B)$,
$({\lambda}_1^B,{\lambda}_2^B)$ and $({\lambda}_1^B,{\lambda}_2^C)$. Similar to $\mathcal{Z}_2$, we determine the corresponding cornerpoint using $p_B=1$;

$\mathcal{Z}_4:$ This area is in the convex hull which is confined by cornerpoints A and C and $({\lambda}_1^C,{\lambda}_2^A)$. We note that $\mathcal{Z}_4$ includes the boundary points between cornerpoints A and C which lead to $p_C\neq 0$ and $p_A \neq 0$. We use linear equations in (\ref{eq:P_C}) and (\ref{eq:P_AP_B}) to describe the relation between $p_A$ and $p_C$ such that $p_A=1$ and $p_C=1$ at cornerpoints A and C, respectively;
    
$\mathcal{Z}_5:$ This zone shows the area between $\mathcal{Z}_1$, $\mathcal{Z}_3$ and the outer-bound of stable throughput region which contains the boundary points between cornerpoints C and B. Similar to $\mathcal{Z}_4$, in this area, $p_C\neq 0$ and $p_B \neq 0$ and they are linearly related based on (\ref{eq:P_C}) and (\ref{eq:P_AP_B}).

According to these different zones, we define $p_C$ as
\begin{align}
\label{eq:P_C}
p_C=
\left\{ \begin{array}{lll}
\frac{\Delta_Z-\Delta {\lambda}}{\Delta_Z-| {\lambda}_2^C-\lambda_1^C}|, & ( {\lambda}_1,\lambda_2) \in \mathcal{Z}_4~\text{or}~\mathcal{Z}_5, \\
1, & ({\lambda}_1,\lambda_2) \in \mathcal{Z}_1, 
\\
0, & ({\lambda}_1,\lambda_2) \in \mathcal{Z}_2~\text{or}~\mathcal{Z}_3,
\end{array} \right.
\text{,~where}~& \Delta_Z=
\left\{ \begin{array}{ll}
|{\lambda}_2^A-\lambda_1^A|, & ({\lambda}_1,\lambda_2) \in \mathcal{Z}_4, \\
|{\lambda}_2^B-\lambda_1^B|, & ({\lambda}_1,\lambda_2) \in \mathcal{Z}_5,
\end{array} \right.
\end{align}
and where $\Delta{\lambda}=|\lambda_2-\lambda_1|$, and $p_A$ and $p_B$ are written as
\begin{align}
\label{eq:P_AP_B}
p_A=
\left\{ \begin{array}{lll}
1-p_C, & ({\lambda}_1,\lambda_2) \in \mathcal{Z}_4, \\
1, & ({\lambda}_1,\lambda_2) \in \mathcal{Z}_2,
\\
0, & \text{otherwise},
\end{array} \right.
& p_B=
\left\{ \begin{array}{lll}
1-p_C, & ({\lambda}_1,\lambda_2) \in \mathcal{Z}_5, \\
1, & ({\lambda}_1,\lambda_2) \in \mathcal{Z}_3,
\\
0, & \text{otherwise}.
\end{array} \right.
\end{align}

According to (\ref{eq:P_C}) and (\ref{eq:P_AP_B}), there are at most two non-zero probabilities in each zone that complement each other. Therefore, our transmission protocol can generate $P[t]$ with Bernoulli distribution. However, to track the destination queues, all network nodes need to get the same outcome from $P[t]$ at each time instant. To do this, network nodes will generate $P[t]$ in a ``deterministic'' fashion using a pseudo-random number generator with a common starting seed. 
\begin{remark}
\label{Remark:SlowRateChange}
We note each node determines the current zone of Fig.~\ref{Fig:zones} based on $(\lambda_1, \lambda_2)$. We assume the arrival rates are stationary and may change slowly over time. In particular, the arrival rates remain constant for a sufficiently large number of time instants such that the system reaches stability. Then, if ${\sf Tx}_i$ finds its arrival rate has changed, it informs other nodes of this change during the update procedure as described in Section~\ref{subsect:update}. To ensure all bits are treated by the delivery time according to the same transmission policy, when rates change, transmitters can halt communications for $o(n)$ as further described when we study bit lifetime in Section~\ref{sub:life_time}. We note that this pause will not affect the overall rates when $n$ is sufficiently large.
\end{remark}
Moreover, to calculate $P[t]$ for any $(\lambda_1,\lambda_2)\in \mathcal{D}$, all nodes: 1) Find the current stability zone based on $(\lambda_1,\lambda_2)$; 2) Use \eqref{eq:P_C} and \eqref{eq:P_AP_B} to obtain $p_A$, $p_B$ and $p_C$; 3) Run a pseudo-random number generator with a common starting seed to generate $P[t]$.

\subsection{Tracking the status of the transmitted bits}
\label{subsect:Transparent_CT}

In this part, we show that if all signals are delivered successfully, ${\sf Rx}_{1}$ will have sufficient number of linearly independent equations to decode all data bits. Moreover, we prove that each network node can track the status of any transmitted bit with no confusion. To do this, we present the following remark to explain different types of delivered data bits:
\begin{remark}
\label{Remark:ActualDelivery}
Data bits are delivered through: 1) Actual delivery: ${\sf Rx}_i$ obtains the bit directly from ${\sf Tx}_i$ with no interference; 2) Virtual delivery: ${\sf Rx}_i$ needs future bits from ${\sf Tx}_{\bar{i}}$ to recover its own bit; 3) XORed delivery: In this case, the non-XORed bit gets XOR with other bit, and then the XORed bit may get another XOR combination or be delivered actually or virtually. We say that virtual and XORed delivery describe the necessary equations for decodable bits at ${\sf Rx}_i$.
\end{remark}  

\begin{claim}
Assuming successful delivery of all signals according to the strategy described above, ${\sf Rx}_i,i=1,2$ has sufficient number of linearly independent equations to decode all its desired data bits.
\end{claim}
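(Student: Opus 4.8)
The plan is to argue by a potential/invariant argument over the queues, showing that the coding rules of Sections~\ref{Subsection:DO}--\ref{Subsection:CT} maintain, as an invariant, that the set of ``pending'' equations held implicitly by the transmitters, once actually delivered, forms a full-rank system for ${\sf Rx}_i$. Concretely, I would first fix a receiver, say ${\sf Rx}_1$, and partition its desired bits according to the terminal queue each bit ends up in: bits reaching $Q_1^F$ directly (actual delivery), bits reaching $Q_1^F$ as virtually delivered bits, bits sitting in $Q_1^{1,2|\emptyset}$ or $Q_2^{1,2|\emptyset}$ (common-interest bits), and bits in the side-information queues $Q_1^{1|2}, Q_1^{2|1}, Q_2^{1|2}, Q_2^{2|1}$ or $Q_i^{c_1}$. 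Using Remark~\ref{Remark:ActualDelivery}, every bit is accounted for by exactly one of actual, virtual, or XORed delivery, so this partition is well-defined and exhaustive.

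Next I would set up the induction. The natural induction is on the sequence of ``events'' (time instants processed at an update, ordered by SN occurrence), with the inductive hypothesis being: after processing event $k$, for each receiver the multiset of bits that have been assigned a delivery mechanism so far can be decoded from the corresponding (future) delivered equations together with the side information already at that receiver. For the base case the claim is vacuous. For the inductive step I would go through the delivering options and the three XOR opportunities of Section~\ref{Subsection:XOR_COM} one at a time and verify that each rule preserves the hypothesis: for ``creating common-interest bits'' (Fig.~\ref{Fig:intend}(a)), once the common-interest bit $a_1$ is delivered to both receivers, ${\sf Rx}_1$ gets $a_1$ outright and ${\sf Rx}_2$ cancels $a_1$ from $a_1\oplus b_1$ to recover $b_1$; for ``exploiting side information'' (Fig.~\ref{Fig:intend}(b)), the same bookkeeping applies; for each of the three XOR combinations, the explicit decoding chains already spelled out (e.g. for XOR between $Q_i^{c_1}$ and $Q_i^{i|\bar i}$: use $b_2$ and $b_1\oplus b_2$ to get $b_1$, then $a_1$, then $a_2$) show that the newly formed XORed bit, once delivered to both receivers, unwinds into all the original bits without consuming equations needed elsewhere. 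The key point to check is disjointness: a side-information bit or a common-interest bit is used in exactly one decoding chain, so no equation is ``double-counted'' and the counting of linearly independent equations versus unknowns stays balanced.

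Finally, I would convert the invariant into the statement. At the end of communication, every desired bit of ${\sf Rx}_1$ has reached $Q_1^F$ (directly, virtually, or via a chain of XORs terminating in an actual/virtual delivery); applying the invariant to the final event shows ${\sf Rx}_1$ holds a system of equations that is triangular with respect to the decoding order built in the inductive step, hence of full rank, hence decodable. The symmetric argument handles ${\sf Rx}_2$.

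I expect the main obstacle to be making the disjointness/consistency claim airtight across the full $35\times 16$ control table, not just the eight flexible cases $F_1$--$F_8$: one must be sure that the priority policy (Table~\ref{table:priority}) plus the control table never routes a bit so that two distinct XORed equations both rely on the same uncancelled side-information bit, which would break triangularity. The clean way around this is to define, once and for all, a strict ordering on bits (by arrival time, then by queue priority) and prove that every XOR combination only ever XORs a ``new'' bit into the running combination of strictly older bits, so that the global system of delivered equations is automatically lower-triangular in that ordering; full rank is then immediate and the per-case verification reduces to checking that each rule respects this ordering — which is exactly what the update procedure of Section~\ref{subsect:update} is designed to guarantee.
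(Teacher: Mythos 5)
Your proposal follows essentially the same route as the paper's proof: an induction over time instants, with a base case at $t=1$ and an inductive step that checks each delivery mechanism (actual, virtual, XORed) against the decoding chains already exhibited in Sections~\ref{Subsection:DO} and~\ref{Subsection:XOR_COM}. Your added triangular-ordering/disjointness argument is a welcome refinement that makes explicit the linear-independence step the paper's proof leaves implicit, but it does not change the underlying approach.
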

\begin{proof}
We prove the claim for ${\sf Rx}_{1}$, and a similar statement holds for ${\sf Rx}_{2}$. It suffices to prove the claim for the XORed bits as actual/virtual delivery (see Remark~\ref{Remark:ActualDelivery}) do not require further action.

First, we show that the claim holds at $t=1$. Data bit is available in $Q_1^{1|\emptyset}$ and can be delivered through actual or virtual delivery. In Section~\ref{Subsection:DO}, we showed how ${\sf Tx}_{2}$ helps ${\sf Rx}_{1}$ to recover its desired bit in virtual delivery, and thus, ${\sf Rx}_{1}$ gets sufficient number of linearly independent equations to recover the desired bit;

Second, we show that if the claim holds for time instant $t$, then it also holds for time instant $t+1$. Suppose, ${\sf Rx}_i,i=1,2$ has sufficient number of linearly independent equations to decode all data bits at the end of time instant $t$. At time instant $t+ 1$, ${\sf Tx}_{1}$ can deliver the data bit through actual, virtual, or XORed delivery to ${\sf Rx}_{1}$. As we discussed in Sections~\ref{Subsection:DO} and \ref{Subsection:XOR_COM}, by delivering participated bits in delivering options and XOR models, ${\sf Rx}_{1}$ gets sufficient number of linearly independent equations to recover its desired bits. This completes the proof.
\end{proof}

In addition, to show all bits are tractable at each network node, we assume the control and the priority tables are shared with all nodes prior to communication. As detailed in Section~\ref{subsect:update}, the transmitters regularly announce the number of newly arrived bits. Then, we have the following claim whose proof is presented in Appendix~\ref{apndx:transparent_CT}. 

\begin{lemma}
\label{claim:transparent_CT}
Given the available information at each node as described above, the status of all bits at each transmitter can be tracked correctly at any other node.
\end{lemma}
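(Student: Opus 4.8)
The plan is to prove Lemma~\ref{claim:transparent_CT} by induction on the update epochs, showing that at the start of each epoch every node holds an identical, correct census of the contents of all six virtual queues at both transmitters, and that this invariant is preserved through the $L$ time instants of the epoch. The base case is immediate: before communication begins, all queues are empty (or, if one prefers, hold a publicly announced initial content), the control table and priority table of Section~\ref{Subsection:CT} and Table~\ref{table:priority} are shared, and the common seed for the pseudo-random generator of $P[t]$ is fixed, so all nodes agree trivially.

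For the inductive step, assume at the beginning of an update epoch that every node knows the exact multiset of bits in each queue $Q_i^{i|\emptyset}, Q_i^{1,2|\emptyset}, Q_i^{i|\bar i}, Q_i^{\bar i|i}, Q_i^{c_1}, Q_i^F$ for $i=1,2$. First I would note that, by the update procedure of Section~\ref{subsect:update}, at the epoch boundary ${\sf Tx}_i$ announces the number of newly arrived bits, so all nodes learn the updated content of $Q_i^{i|\emptyset}$; combined with the inductive hypothesis this gives a shared census of all $35$ possible origin-queue combinations. Next, at each time instant $t$ within the epoch, the priority table (Table~\ref{table:priority}) is a deterministic function of the (commonly known) queue occupancies, so all nodes agree on which queue each transmitter draws from; the situation number $\mathrm{SN}(t)$ is the decimal encoding of $S[t-1]$, which is available to the transmitters as delayed CSIT and to the receivers as (delayed) instantaneous CSI, hence known to every node; and the flexibility variable $P[t]$ is generated deterministically from the common seed and the current stability zone (Section~\ref{subsect:find_stability_zones}), so all nodes compute the same value. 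The control table then maps the triple (origin-queue combination, $\mathrm{SN}$, $P[t]$) to a unique pair of destination queues, and the XOR rules of Section~\ref{Subsection:XOR_COM} are likewise triggered deterministically by the current queue contents. Therefore every node can replay the exact same bit movements, and the invariant is restored at time $t+1$; iterating to the end of the epoch closes the induction.

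The step I expect to be the main obstacle is verifying that the destination queues are genuinely \emph{unambiguously} determined — i.e.\ that no node ever faces a choice that another node resolves differently. This requires checking that (i) whenever the control table offers more than one destination option, the tie-breaking rule ``pick the destination queues that maximize the number of available XOR combinations'' (Section~\ref{Subsection:CT}) evaluates to the same answer at every node, which holds because that count is a function only of the shared queue census; and (ii) the genuinely flexible cases $F_1$--$F_8$ of Table~\ref{table:flexible queues} are all indexed by $P[t]$, whose value is common to all nodes, and the residual ``half the time'' choice inside $F_1$ with $P[t]=C$ is itself driven by the shared pseudo-random stream. One must also confirm that a bit moved into $Q_i^F$ or virtually delivered never needs to be re-examined — which is exactly the content of the already-established claim that successful delivery of all signals yields enough linearly independent equations at each receiver — so tracking its status requires no further information. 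Assembling these observations, I would conclude that the deterministic pipeline $\big(\text{queue census}\big)\to\big(\text{priority}\big)\to\big(\text{origin}\big)$, together with the commonly available $\mathrm{SN}(t)$ and $P[t]$, feeds the control table to produce identical destination assignments at every node, which is precisely the assertion of the lemma.
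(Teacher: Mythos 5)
Your proof is correct and follows essentially the same route as the paper's Appendix~\ref{apndx:transparent_CT}: an induction over time showing that every node can deterministically replay the origin-queue selection (via the shared priority table), the destination-queue assignment (via the shared control table, common CSI, and the common-seed $P[t]$), and the XOR moves. Your version is somewhat more explicit than the paper's about why the flexible cases and tie-breaking rules resolve identically at every node, but the underlying argument is the same.
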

\subsection{Pseudo code of protocol}
\label{subsect:pseudo-code}

We present a pseudo code, Algorithm~\ref{algo:general_pseudo}, to explain how ${\sf Tx}_i$ implements the general transmission protocol. At each time instant, newly arrived bits go to $Q_i^{i|\emptyset}$, and ${\sf Tx}_i$ selects the origin queues with the highest priority. Later with delayed CSI, ${\sf Tx}_i$ determines the destination queues. Finally, ${\sf Tx}_i$ checks the XOR opportunities and if available, it stores the XORed bit in $Q_i^{1,2|\emptyset}$ and removes the corresponding bits from their queues. 
\begin{algorithm}
  \caption{Pseudo code for general transmission protocol at time instant $t$}
  \begin{algorithmic}[1]
  \label{algo:general_pseudo}
  \STATE Save newly arrived bits in $Q_{i}^{i|\emptyset}$;\
  \STATE Receive $S[t-1]$ as delayed CSIT from both receivers;\
  \STATE Determine the destination queues using $S[t-1]$ and the origin queues of time instant $t-1$ such that
  \IF{ the origin queues belong to $\{\text{F}_1,\text{F}_2, \ldots,\text{F}_8\}$}
  \STATE Use Table~\ref{table:flexible queues} to find the destination queues;
  \ELSIF{there is more than one choice for destination queues}
  \STATE {Choose destination queues that provide more XOR opportunities based on our general rule in Section~\ref{Subsection:CT}};
  \ELSE
  \STATE {Select the only option for destination queues;}
  \ENDIF
  \IF{XOR combination happens}
    \STATE Create XORed bit and save it in $Q_{i}^{1,2|\emptyset}$;\
    \STATE Remove the bits (which participate in XOR combination) from their queues.\\
  \ENDIF
  \STATE Select the origin queues at time instant $t$ with highest priority based on Table~\ref{table:priority};\
  \STATE Transmit the data bits from the origin queues.
\end{algorithmic}
\end{algorithm}
\section{Transmission protocol for non-homogeneous settings}
\label{sect:non-homogeneous}
So far we explained the details of the transmission protocol for the homogeneous settings. In this section, we focus on non-homogeneous settings.
\subsection{Symmetric Settings}
In this case, we assume that the erasure probabilities of direct channels from ${\sf Tx}_i$ to ${\sf Rx}_i$ are equal $(i.e.~ \delta_{ii}=\delta_d,~ i=1,2)$, and similarly for cross channels from ${\sf Tx}_i$ to ${\sf Rx}_{\bar{i}}$, we have $\delta_{i\bar{i}}=\delta_c,~i=1,2,~ \bar{i}=3-i$. In this work, we focus on $0\leq\delta_d\leq 0.5$ and $0\leq\delta_c\leq \frac{1}{2-\delta_d}$ and observe that, compared to transmission protocol in homogeneous settings, we only need to change the flexible destination queues. Here, we explain the details when $0\leq\delta_d\leq \frac{3-\sqrt{5}}{2}$ as follows:

\subsubsection{$0\leq \delta_c < \delta_d$}
In this case, all channels have low erasure probabilities and cross channels are stronger than direct channels. Therefore, we update all flexible destination queues such that more bits go to $Q_i^{\bar{i}|i}$ when $P[t]= C$. 
To clarify the procedure, we consider an example. Suppose $Q_1^{1,2|\emptyset}, Q_2^{2|1}$ are the origin queues and $\text{F}_5$ happens. As we discussed in Table~\ref{table:flexible queues}, $Q_1^{2|1}, Q_2^{F}$ and $Q_1^{F}, Q_2^{2|1}$ are two choices for destination queues. Here, we select $Q_1^{2|1}, Q_2^{F}$ which lead more bits to $Q_i^{\bar{i}|i}$. Similar statement also holds for other flexible cases;
\subsubsection{$\delta_d < \delta_c \leq \frac{1}{2-\delta_d}$}
In this interval, direct channels are stronger than cross links. Thus, we deliver more bits through the direct channels, and this leads to similar strategy as Table~\ref{table:flexible queues}.

Similarly, we can obtain the destination queues when $\frac{3-\sqrt{5}}{2}\leq\delta_d \leq 0.5$. 
\subsection{Non-homogeneous settings}
In this part, we consider all channels have different erasure probabilities and design the transmission protocol based on different $\delta_{ij}$ values. To do this, we extend the flexible destination queues to all cases which deal with more than one option for destination queues. In this work, we focus on the case when  $\frac{3-\sqrt{5}}{2}<\delta_{ii}\leq 0.5$, $0 \leq \delta_{\bar{i}\bar{i}}\leq \frac{3-\sqrt{5}}{2}$, $0.5 \leq \delta_{i\bar{i}}< \frac{1}{2-\delta_{ii}}$ and $\delta_{ii} < \delta_{\bar{i}i} \leq 0.5$. 
We note that flexible cases when $P[t]\neq C$ are similar to Table~\ref{table:flexible queues}. However, we observe that non-symmetric channel gains lead to non-symmetric cornerpoint C. Thus, we select the queues that help data bits to be delivered through the stronger channels in future. In other words, we choose the destination queues correspond to the links with lower erasure probabilities. Suppose an example to clarify this procedure where $a$ and $b$ are available in $Q_1^{1,2|\emptyset}, Q_2^{2|1}$, above channel conditions hold for $i=1$, and SN-2 happens. We can show that there are two options for destination queues: 1) $Q_1^{F}, Q_2^{2|1}$: In  this option, ${\sf Tx}_2$ delivers $b$ to ${\sf Rx}_2$ through $S_{22}[t]$; 2) $Q_1^{2|1}, Q_2^{F}$: In this case, ${\sf Tx}_1$ delivers its data to unintended receiver through the cross link. Based on channel gains, we select $Q_1^{F}, Q_2^{2|1}$ since $S_{22}[t]$ is stronger than $S_{12}[t]$. Similar statements also hold for other flexible destination queues.
\begin{remark}
We note that Algorithm~\ref{algo:general_pseudo} describes our general transmission protocol for any erasure probabilities and the non-homogeneous setting. To do this, we only need to update the flexible destination queues based on different $\delta_{ij}$ values.
\end{remark}
\section{Simulation}
\label{Section:Simulation}
In this section, we provide simulation results of our proposed scheme (the corresponding  Python code is available online at~\cite{key}.
\subsection{Stable Throughput Region vs. Capacity Region}

We compare the stable throughput region of our proposed scheme for stochastic data bit arrivals under non-homogeneous channel assumption with our generalized information-theoretic outer-bounds with delayed-CSIT in~\eqref{eq:outer-bound}. 
We make the following assumptions in our simulations: 1) Data bits arrive at ${\sf Tx}_i$ according to a Poisson $(\lambda_i \in [0, 1])$ distribution, $i=1,2$, where $\lambda_i$ increases in $0.001$ increments and each transmitter can send at most one bit per time instant; 2) Channel coefficient of each link between ${\sf Tx}_i$ and ${\sf Rx}_j$ at each time instant is a Bernoulli random variable ($S_{ij}(t)\sim \mathcal{B}(1-\delta_{ij})$) and is distributed independently from other users and across time; 3) We consider two different regimes based on the results of~\cite{vahid2014capacity} and our generalized outer-bounds  in~\eqref{eq:outer-bound}, to select the values of $\delta_{ij}$ in non-homogeneous channels as: i) $0\leq \delta_{i\bar{i}}\leq \frac{1}{2-\delta_{ii}}$, $i=1,2, \bar{i}=3-i$ where capacity region is known; ii) $\frac{1}{2-\delta_{ii}}\leq \delta_{i\bar{i}}\leq 1$ as the unknown capacity region; 4) We determine the destination queues with respect to the values of $\lambda_i$ and $\delta_{ij}$, and bits with highest priority are sent at each time instant; 5) We use $\beta$ to denote the average time instant that a bit stays in the communication network and show by simulation that $\beta \propto \sqrt{n}$. We assume data bits are arriving during $n-\beta$ time instants to the transmitters and dedicate last $\beta$ time instants of communication time to deliver the remained bits in the network queues; 6) We set acceptable error $\epsilon$ equal to $1\%$ and say each rate-tuple ($\hat{\lambda}_1, \hat{\lambda}_2$) that satisfies definition~\ref{def:stable_rate} is considered to be stable. In these cases, we verify through simulations that by increasing the communication block length, the error margin in fact decreases.




Fig.~\ref{Fig:ordinary_pross_rate_p_0.6} depicts the stable throughput region of the non-homogeneous setting as well as the information-theoretic outer-bounds when $\delta_{11}=0.4$, $\delta_{12}=0.6$, $\delta_{22}=0.2$, $\delta_{21}=0.5$, $n=3\times 10^4$, $\beta=185$, and $\epsilon=1\%$ . This figure shows that all cornerpoints are achievable and the stable throughput region \underline{matches} the capacity region under our proposed scheme. Also, Fig.~\ref{Fig:ordinary_pross_rate_p_0.6} shows that if similar data movement rules of cornerpoint B are used at cornerpoint A, then the stable throughput region is strictly smaller than the capacity region. As a result, data movement rules in non-homogeneous setting should be changed with respect to the relative values of $\lambda_i$'s and $\delta_{ij}$'s.
\begin{figure}[!ht]
  \centering
  \includegraphics[trim = 0mm 0mm 0mm 0mm, clip, scale=5.5, width=0.55\linewidth]{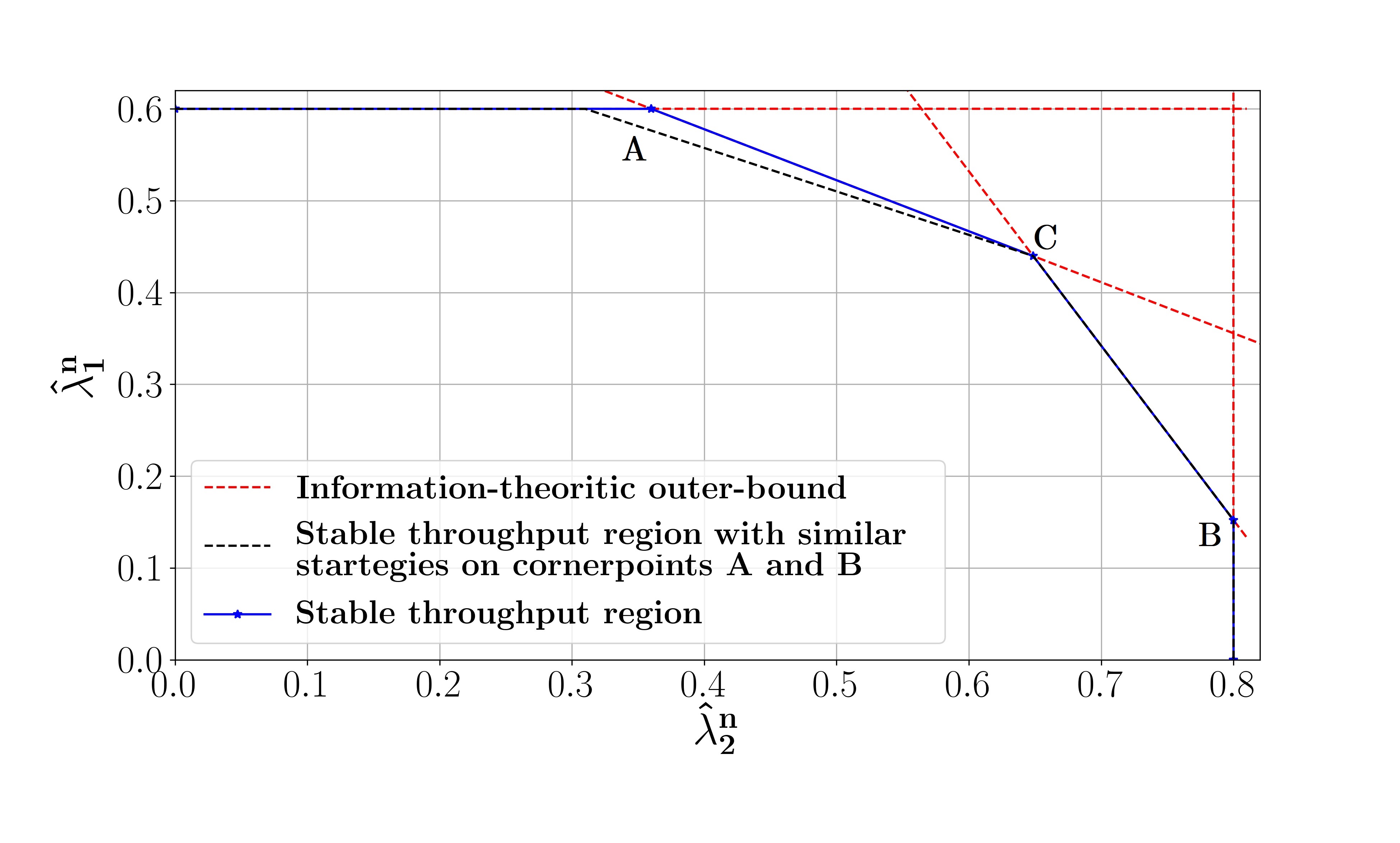}
  \vspace{-5mm}
  \caption{\it Comparing the stable throughput region and the capacity region in the non-homogeneous setting when $\delta_{11}=0.4$, $\delta_{12}=0.6$, $\delta_{22}=0.2$, $\delta_{21}=0.5$, $n=3\times 10^4$, $\beta=185$ and $\epsilon=1\%$.} \label{Fig:ordinary_pross_rate_p_0.6}
  \vspace{-2mm}
\end{figure}

Fig.~\ref{Fig:epsilon}(a) describes the stable throughput region for a case in which the capacity region is unknown. More specifically, we set $\delta_{11}=0.4$, $\delta_{12}=0.7$, $\delta_{22}=0.4$, $\delta_{21}=0.7$, $n=3\times 10^4$, $\beta=185$, and $\epsilon=1\%$. This figure shows that stable throughput region \underline{does not} match with information theoretic outer-bounds and there is a $1.5\%$ error margin at the cornerpoint C that does not seem to diminish as we increase the block length. 

Fig.~\ref{Fig:epsilon}(b) shows the impact of the value of the acceptable error rate, $\epsilon$, at the receivers on the shape of the stable throughput region. In particular, we consider two different values of $\epsilon$ ($1\%$ and $0.1\%$) in the homogeneous setting when $\delta=0.4$, $n=2\times 10^4$ and $\beta=140$. As we observe, the overall shape of the stable throughput region remains similar but gets closer to the outer-bounds as $\epsilon$ increases.
\begin{figure}[!ht]
  \centering
  \includegraphics[trim = 0mm 0mm 0mm 0mm, clip, scale=7.5, width=0.7\linewidth]{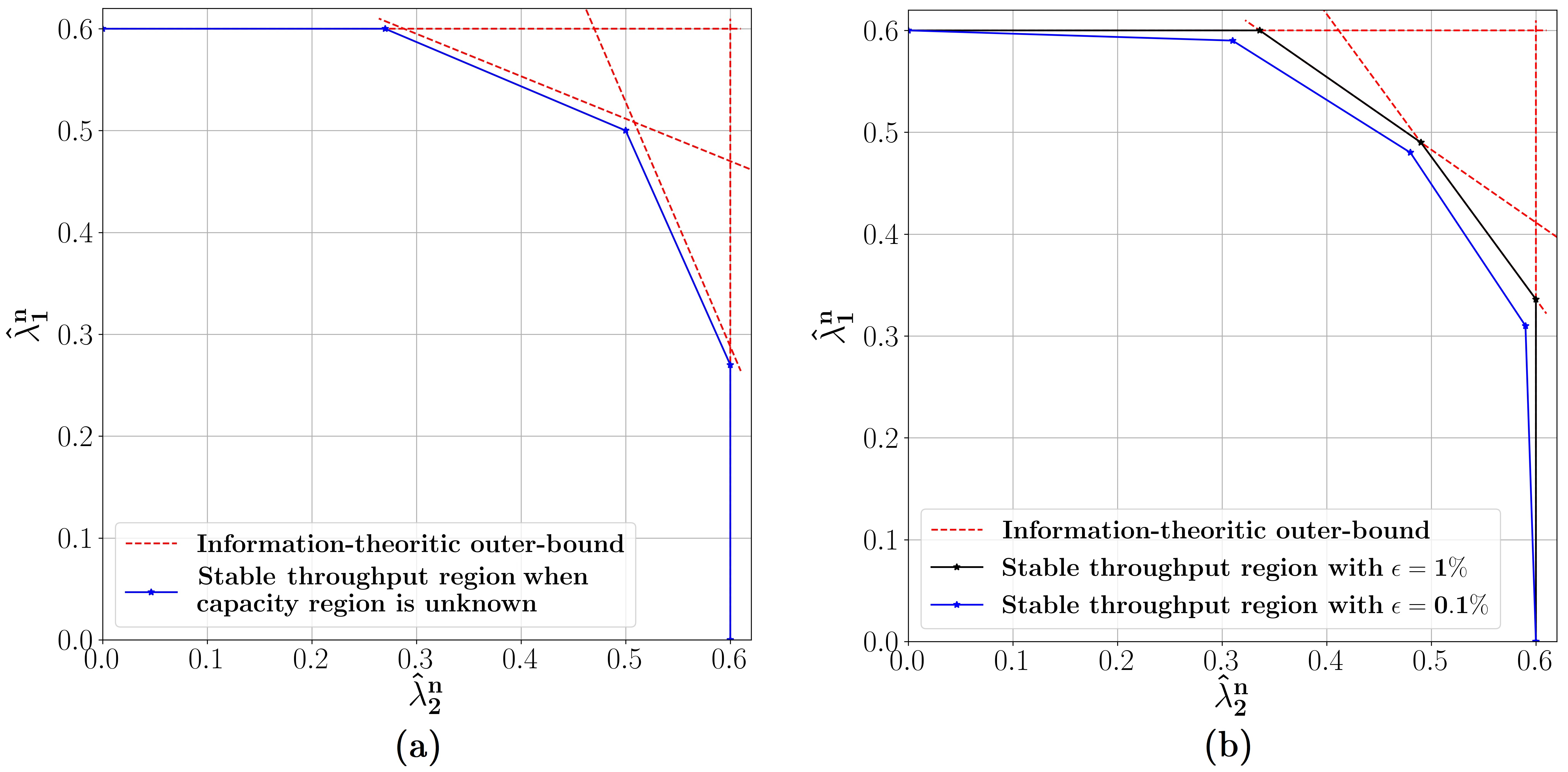}
  \caption{\it (a) The stable throughput region and the information-theoretic outer-bounds when the capacity region is unknown: $\delta_{11}=0.4$, $\delta_{12}=0.7$, $\delta_{22}=0.4$, $\delta_{21}=0.7$, $n=3\times 10^4$, $\beta=185$, and $\epsilon=1\%$; 
  (b) The stable throughput regions for different values of $\epsilon$ (1$\%$ and 0.1$\%$) versus the information-theoretic outer-bounds when $\delta=0.4$, $n=2\times 10^4$, and $\beta=140$.} \label{Fig:epsilon}
  \vspace{-5mm}
\end{figure}
\subsection{Coding Complexity}

The power of wireless is in multi-casting, and thus, the key idea behind optimal feedback-based transmission strategies for BC and distributed interference channels is to create as many linear combinations as possible in order to simultaneously satisfy multiple users. However, increasing the number of XOR operations raises encoding and decoding complexity. Thus, we analyze the number of XOR operations in our proposed protocol to gain a deeper understanding of this tradeoff. Fig.~\ref{Fig:num_XOR_comb}(a) describes a histogram of the number of bits which are delivered through XOR operations for the cornerpoint C in homogeneous setting when $\delta=0.4$, $n=10^4$ and $\beta=100$. We observe  that about $40$ percent of total bits are delivered through the XOR combinations. If we use an encoder-decoder mechanism which does not allow linear combinations with more than 3 participating bits, about $25$ percent of total bits will be affected based on Fig.~\ref{Fig:num_XOR_comb}(a). It is thus an interesting future direction to characterize the stable throughput region of wireless networks with limitations on encoding and decoding complexity.

\begin{figure}[!ht]
  \centering
  \includegraphics[trim = 0mm 0mm 0mm 0mm, clip, scale=5.5, width=.99\linewidth]{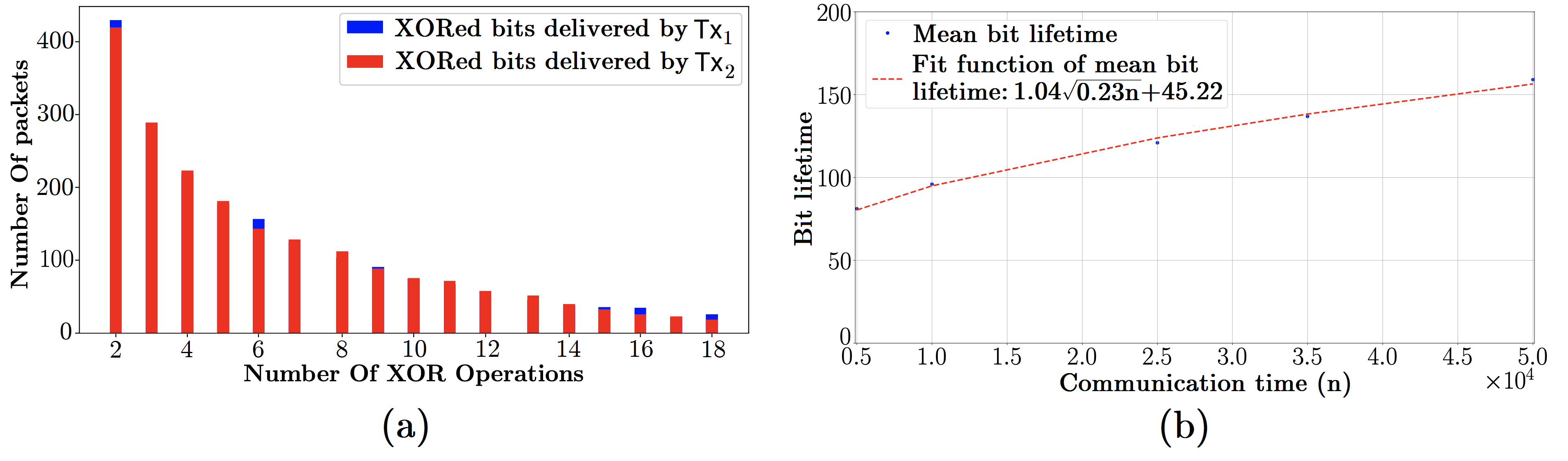}
  \caption{\it (a) Number of bits versus number of XOR operations when $\delta=0.4$, $n=10^4$, $\beta=100$ and $\epsilon=1\%$; (b) Mean value of ${\sf Tx}_1$'s bit lifetime and  $1.04\sqrt{0.23n}+45.22$ as its fitting curves versus communication time for cornerpoint C when $\delta=0.4$ and $n\in[5\times 10^3,5\times 10^4]$.\label{Fig:num_XOR_comb}}
\end{figure}
\subsection{Bit Lifetime (delivery delay)}
\label{sub:life_time}

In general, some data bits may have delivery deadline that transmission protocols must take into account. To evaluate this delay for our protocol, we define the lifetime of a bit as the time interval between its arrival to the system (joins $Q_{i}^{i|\emptyset}$) and its actual delivery time (either be delivered with no interference or sufficient number of linearly independent equations join $Q_{i}^{F}$ so that the bit can be recovered). Fig.~\ref{Fig:num_XOR_comb}(b) depicts the mean value of bit lifetime for the cornerpoint C in the homogeneous setting when $\delta=0.4$ and $n\in[5\times 10^3,5\times 10^4]$. This figure shows the mean value of bit lifetime follows the square root of the total communication time. 
We also learn from our simulations that the worst-case lifetime behaves linearly in the communication time.

\section{Conclusion}
\label{Section:Conclusion}
 We studied the stable throughput region of an interference channel with two transmitter-receiver pairs under the non-homogeneous setting, and 
 showed numerically that the stable throughput region, under specific assumptions, matches the known capacity region. We also examined our proposed scheme in other cases and for some examples numerically showed the gap between the stable throughput region and the outer-bounds. We presented general transmission protocols with respect to $\delta_{ij}$ which achieve the entire region. We defined a priority policy to determine which bit should be sent at any given time in order to maximize the achievable stable rates. Also, we proposed multiple control tables with respect to the asymmetric channel conditions and define flexible cases to manage the data movements between different queues. Finally, we showed the bit lifetime in our protocol scales as the square root of the total communication time, and we investigated how many XOR operations are needed to deliver the data bits. An interesting future direction is to quantify the trade-off we illustrated between complexity and stable rates. One possible approach would be to incorporate techniques such as Reinforcement Learning to solve this problem.
\begin{appendices}
\section{Proof of theorem~\ref{thm:outer-bound_capacity} }
\label{apndx:A1}
\begin{proof}
Assume rate-tuple $(\hat{\lambda}_1,\hat{\lambda}_2)$ is achievable. First, we note that the condition of $\hat{\lambda}_i\leq 1-\delta_{ii}, i=1,2$, is trivial since the transmitters do not exchange their messages and ${\sf Rx}_i$ can receive at most $(1-\delta_{ii})$ of transmitted bits from ${\sf Tx}_i$ when $S_{ii}[t]=1$. Second, we define $\alpha_{i}=\frac{1-\delta_{ii}\delta_{i\bar{i}}}{1-\delta_{i\bar{i}}}, i=1,2, \bar{i}=3-i$ to derive the outer-bound. We have
\begin{align}
\label{eq:start_out_proof}
n(\hat{\lambda}_1+\alpha_1\hat{\lambda}_2)& =H(W_1)+\alpha_1H(W_2)\stackrel{(a)}{=}H(W_1|W_2,S^n)+\alpha_1H(W_2|S^n) \\ \nonumber
&\stackrel{\text{Fano}}{\leq} I(W_1;Y_1^n|W_2,S^n)+\alpha_1I(W_2;Y_2^n|S^n)+n\xi_n \\ \nonumber
& = H(Y_1^n|W_2,S^n)-\underbrace{H(Y_1^n|W_1,W_2,S^n)}_{=~0}+\alpha_1H(Y_2^n|S^n)-\alpha_1H(Y_2^n|W_2,S^n)+n\xi_n \\ \nonumber
&\stackrel{(b)}{=}\alpha_1H(Y_2^n|S^n)+H(Y_1^n|W_2,X_2^n,S^n)-\alpha_1H(Y_2^n|W_2,X_2^n,S^n)+n\xi_n \\ \nonumber
&\stackrel{(c)}{=}\alpha_1H(Y_2^n|S^n)+H(S_{11}^nX_1^n|W_2,X_2^n,S^n)-\alpha_1H(S_{12}^nX_1^n|W_2,X_2^n,S^n)+n\xi_n \\ \nonumber
&\stackrel{(d)}{=}\alpha_1H(Y_2^n|S^n)+H(S_{11}^nX_1^n|W_2,S^n)-\alpha_1H(S_{12}^nX_1^n|W_2,S^n)+n\xi_n \\ \nonumber
&\stackrel{(e)}{=}\alpha_1H(Y_2^n|S^n)+H(S_{11}^nX_1^n|S^n)-\alpha_1H(S_{12}^nX_1^n|S^n)+n\xi_n \\ \nonumber
&\stackrel{(f)}{\leq}\alpha_1H(Y_2^n|S^n)+n\xi_n=\alpha_1\sum_{t=1}^n  H(Y_2[t]|Y_2^{t-1},S^n)+n\xi_n  \\ \nonumber
&\stackrel{(g)}{\leq}\alpha_1\sum_{t=1}^n  H(Y_2[t]|S^n)+n\xi_n \stackrel{(h)}{\leq} 
\frac{(1-\delta_{11}\delta_{12})(1-\delta_{22}\delta_{12})}{1-\delta_{12}}n+n\xi_n
\end{align}
where $(a)$ holds since $S^n$, $W_1$, and $W_2$ are mutually independent; $(b)$ is true since $X_2^n$ is a function of $(W_2,S^n)$; $(c)$  holds since $X_2^n$ and $S^n$ are known; and $(d)$ happens due the same reason as $(b)$; $(e)$ follows from 
\begin{align}
\label{eq:equal_withoutW_11}
0& \leq H(S_{11}^nX_1^n|S^n)-H(S_{11}^nX_1^n|W_2,S^n)=I(S_{11}^nX_1^n;W_2|S^n) \leq I(W_1,S_{11}^nX_1^n;W_2|S^n)\\ \nonumber
& =I(W_1;W_2|S^n)+I(S_{11}^nX_1^n;W_2|W_1,S^n)\stackrel{(k)}{=}I(S_{11}^nX_1^n;W_2|W_1,S^n)\stackrel{(\ell)}{=}0,
\end{align}
where $(k)$ holds since messages and $S^n$ are mutually independent; $(\ell)$ follows the fact that $X_1^n$ is a function of $(W_1,S^n)$. Thus, $H(S_{11}^nX_1^n|S^n)=H(S_{11}^nX_1^n|W_2,S^n)$. Similarly, we can show that
\begin{equation}
\label{eq:equal_withoutW_12}
  H(S_{12}^nX_1^n|S^n)=H(S_{12}^nX_1^n|W_2,S^n);  
\end{equation}
Moreover, $(f)$ holds since, for time instant $1\leq t \leq n$, we have
\begin{align}
H&(S_{12}[t]X_1[t]|S_{12}^{t-1}X_1^{t-1},S^t) =(1-\delta_{12})H(X_1[t]|S_{12}[t]=1,S_{12}^{t-1}X_1^{t-1},S^{t-1}) \\ \nonumber
&+ \delta_{12} \underbrace{H(S_{12}[t]X_1[t]|S_{12}[t]=0,S_{12}^{t-1}X_1^{t-1},S^{t-1})}_{=~0}\\ \nonumber
&\stackrel{(m)}{=}(1-\delta_{12})H(X_1[t]|S_{12}^{t-1}X_1^{t-1},S^t)\\ \nonumber
&\stackrel{(n)}{\ge}(1-\delta_{12})H(X_1[t]|S_{11}^{t-1}X_1^{t-1},S_{12}^{t-1}X_1^{t-1},S^t)\\ \nonumber
&\stackrel{(o)}{=}\frac{1-\delta_{12}}{(1-\delta_{11}\delta_{12})}H(S_{11}[t]X_1[t],S_{12}[t]X_1[t]|S_{11}^{t-1}X_1^{t-1},S_{12}^{t-1}X_1^{t-1},S^t),
\end{align}
where $(m)$ is true because $X_1[t]$ and channel gains are independent at time instant $t$; and $(n)$ is the result of the fact that conditioning reduces entropy; $(o)$ holds since $\Pr[S_{11}[t]=0,S_{12}[t]=0]=\delta_{11}\delta_{12}$. Furthermore, since the transmitted signals at time instant $t$ are independent from channel coefficients in future time instants, we can replace $S^t$ by $S^n$ which leads to
\begin{align}
\sum_{t=1}^n  &H(S_{12}[t]X_1[t]|S_{12}^{t-1}X_1^{t-1},S^n) \\ \nonumber
&\geq \frac{1-\delta_{12}}{(1-\delta_{11}\delta_{12})}\sum_{t=1}^n  H(S_{11}[t]X_1[t],S_{12}[t]X_1[t]|S_{11}^{t-1}X_1^{t-1},S_{12}^{t-1}X_1^{t-1},S^n),
\end{align}
as the result, we obtain
\begin{align}
\label{eq:lemma_r_12,r_11}
H(S_{12}^nX_1^n|S^n) &
\geq \frac{1-\delta_{12}}{(1-\delta_{11}\delta_{12})}H(S_{11}^nX_1^n,S_{12}^nX_1^n|S^n)\\ \nonumber
&\stackrel{(p)}{\geq} \frac{1-\delta_{12}}{(1-\delta_{11}\delta_{12})}H(S_{11}^nX_1^n|S^n)=\frac{1}{\alpha_1}H(S_{11}^nX_1^n|S^n),
\end{align}
where $(p)$ is the result of chain rule;
Furthermore, $(g)$ follows from the fact that conditioning reduces entropy;
$(h)$ holds since $(1-\delta_{22}\delta_{12})$ represents the probability of the case that there is at least one ON link between transmitters and ${\sf Rx}_2$.

Finally, we divide both sides of (\ref{eq:start_out_proof}) by $n$ and let $n \rightarrow \infty$ (i.e. $\xi_n \rightarrow 0$) and obtain
\begin{align}
\label{eq:Final_outer_bound_proof}
\hat{\lambda}_1+\frac{(1-\delta_{11}\delta_{12})}{1-\delta_{12}}\hat{\lambda}_2\leq \frac{(1-\delta_{11}\delta_{12})(1-\delta_{22}\delta_{12})}{1-\delta_{12}}
\end{align}
Similar statement as (\ref{eq:Final_outer_bound_proof}) also holds for $i=2$, which completes the proof.
\end{proof}

\section{Example of control table when $0 \leq \delta \leq \frac{3-\sqrt{5}}{2}$ and $Q_1^{1,2|\emptyset},Q_2^{2|\emptyset}$ are origin queues}
\label{apndx:ex_CT}
In this part, we explain the details of movement rules between origin and destination queues when $0 \leq \delta \leq \frac{3-\sqrt{5}}{2}$ and $Q_1^{1,2|\emptyset},Q_2^{2|\emptyset}$ are origin queues. We provide further details as follows:

\vspace{-4mm}
\begin{table}[ht]
\caption{Determine control table when data bits $a$ and $b$ are available in $Q_{1}^{1,2|\emptyset},Q_{2}^{2|\emptyset}$ and $0 \leq \delta \leq \frac{3-\sqrt{5}}{2}$}
\centering
\begin{tabular}{|c|c|c|c|c|c|c|c|}
\hline
SN & Destination Queues & SN & Destination Queues & SN & Destination Queues & SN & Destination Queues \\[0.5ex]

\hline \hline
1 & $\text{F}_3$ &5 & $Q_{1}^{2|1},Q_{2}^{2|\emptyset}$ & 9 & $Q_{1}^{1,2|\emptyset},Q_{2}^{F}$ & 13 & $Q_{1}^{1,2|\emptyset},Q_{2}^{2|1}$ \\
\hline
2 & $Q_{1}^{2|1},Q_{2}^{F}$ & 6 & $Q_{1}^{F},Q_{2}^{2|\emptyset}$ & 10 & $Q_{1}^{1,2|\emptyset},Q_{2}^{F}$ & 14 & $Q_{1}^{1|2},Q_{2}^{2|\emptyset}$ \\
\hline
3 & $Q_{1}^{1,2|\emptyset},Q_{2}^{F}$ & 7 & $Q_{1}^{1,2|\emptyset},Q_{2}^{2|\emptyset}$ & 11 & $Q_{1}^{1,2|\emptyset},Q_{2}^{F}$ & 15  & $Q_{1}^{1|2},Q_{2}^{2|1}$ \\
\hline
4 & $Q_{1}^{2|1},Q_{2}^{F}$ & 8 & $Q_{1}^{F},Q_{2}^{1,2|\emptyset}$ & 12 & $Q_{1}^{1,2|\emptyset},Q_{2}^{F}$ & 16 & $Q_1^{1,2|\emptyset},Q_2^{2|\emptyset}$ \\[1ex]  
\hline
\end{tabular}
\label{table:transition_N_O}
\vspace{-6mm}
\end{table}

\noindent $\diamond$ SN-1: This cases shows flexible destination queues $\text{F}_3$ in Table~\ref{table:flexible queues}. 

\noindent $\diamond$ SN-2: In this SN, ${\sf Rx}_1$ and ${\sf Rx}_2$ receive $a$ and $a \oplus b$, respectively. Therefore, ${\sf Rx}_1$ can decode $a$ easily and ${\sf Rx}_2$ needs $a$ to decode both $a$ and $b$. Hence, $Q_{1}^{2|1}$ and $Q_{2}^{F}$ are the destination queues.
\noindent $\diamond$ SN-3: In this case, ${\sf Rx}_2$ gets $b$ directly form ${\sf Tx}_2$ and ${\sf Rx}_1$ receives $a\oplus b$ form both transmitters. Thus, both receivers still need to receive $a$ form ${\sf Tx}_1$, and this leads to $a \rightarrow Q_{1}^{1,2|\emptyset}$ and $b \rightarrow Q_{2}^{F}$. 
\noindent $\diamond$ SN-4: Both receivers obtain data bits from direct channels, and $a$ and $b$ are the received bits at ${\sf Rx}_1$ and ${\sf Rx}_2$, respectively. As a result, ${\sf Rx}_1$ does not need to receive any other bits, but ${\sf Rx}_2$ still needs $a$. This means, $a \rightarrow Q_{1}^{2|1}$ and $b \rightarrow Q_{2}^{F}$.

\noindent $\diamond$ SN-5 and SN-6: In these SNs, the received data bits consist of only ${\sf Tx}_1$'bit. Therefore, $b$ stays at its origin queue and then $b\rightarrow Q_{2}^{2|\emptyset}$. In SN-5, ${\sf Rx}_2$ still needs $a$ and $a\rightarrow Q_{1}^{2|1}$ and in SN-6 it is not needed to retransmit $a$ and $a\rightarrow Q_{1}^{F}$.

\noindent $\diamond$ SN-7: At this time, ${\sf Rx}_1$ receives $a \oplus b$ and ${\sf Rx}_2$ does not receive anything. In this case, since there is no side information, the data bits will remain in the origin queues. 

\noindent $\diamond$ SN-8: In this SN, ${\sf Rx}_1$ and ${\sf Rx}_2$ receive $a \oplus b$ and $a$, respectively. We find that by delivering $b$ to both receivers, ${\sf Rx}_2$ gets $b$ and ${\sf Rx}_1$ decodes $a$ using $b$ and $a \oplus b$. Hence, $a\rightarrow Q_{1}^{F}$ and $b\rightarrow Q_{2}^{1,2|\emptyset}$.

\noindent $\diamond$ SN-9, 10, and 11: In these situations, as $a$ is not available as the received bit at ${\sf Rx}_1$, $a$ should be delivered to ${\sf Rx}_1$. Moreover, ${\sf Rx}_2$ receives $b$ in SN-9 and SN-10, and $a \oplus b$ in SN-11. Therefore, it is needed to retransmit $a$ to ${\sf Rx}_2$ and $a\rightarrow Q_{1}^{1,2|\emptyset}$ and $b\rightarrow Q_{2}^{F}$.

\noindent $\diamond$ SN-12: Similar to SN-8, we can show that $Q_1^{1,2|\emptyset},Q_2^{F}$ are destination queues.

\noindent $\diamond$ SN-13: In this SN, only ${\sf Rx}_2$ receives $b$ and then $b\rightarrow Q_{2}^{2|1}$ while $a$ stays at its origin queue.

\noindent $\diamond$ SN-14 and SN-15: In these SNs, ${\sf Rx}_2$ receives $a$ and $a\rightarrow Q_{1}^{1|2}$. Furthermore, both channels from ${\sf Tx}_2$ are off in SN-14 and $b\rightarrow Q_{2}^{2|\emptyset}$. However, ${\sf Rx}_1$ receives $b$ in SN-15 and $b\rightarrow Q_{2}^{2|1}$.

\noindent $\diamond$ SN-16: In this case, all links are off, and $a$ and $b$ stay at origin queues. 

\section{Transmission protocol of homogeneous setting when $\frac{3-\sqrt{5}}{2}< \delta \leq 0.5 $}
\label{apndx:A}
In this appendix, we investigate the general transmission protocol for $\frac{3-\sqrt{5}}{2}< \delta \leq 0.5 $. We find that, in this interval, the destination queues are similar to what we described in Section~\ref{Subsection:CT} except for flexible cases when $P[t]=C$. Authors in \cite{vahid2014capacity} shows that the number of bits in $Q_i^{i|\bar{i}}$ is greater than or equal to the number of bits in $Q_i^{\bar{i}|i}$. Hence, we find the flexible destination queues with $P[t]=C$ to include $Q_i^{\bar{i}|i}$ if it is possible. Suppose $a$ and $b$ are available in the origin queues. Table~\ref{table:flexible queues_p=0.6} shows the flexible destination queues with $P[t]=C$ when $\frac{3-\sqrt{5}}{2}< \delta \leq 0.5 $.

\begin{table}[ht]
\vspace{-2mm}
\caption{Determine  flexible destination queues when $a$ and $b$ are available in origin queues and $\frac{3-\sqrt{5}}{2}< \delta \leq 0.5 $}

\centering
\begin{tabular}{|c|c|c|c|c|c|c|c|}
\hline
Flexible ID & SN & $P[t]=C$ &  Flexible ID & SN & $P[t]=C$ \\[0.5ex]

\hline \hline
$F_1$ & 1 & $Q_{1}^{c_1},Q_{2}^{c_1}$ & $F_5$ & 2 & $Q_{1}^{2|1},Q_{2}^{F}$\\[1ex] 
\hline
$F_2$ & 1 & $Q_{1}^{2|1},Q_{2}^{F}$ & $F_6$ & 3 & $Q_{1}^{F},Q_{2}^{1|2}$ \\
\hline
$F_3$ & 1 & $Q_{1}^{c_1},Q_{2}^{c_1}$ & $F_7$ & 8 & $Q_{1}^{F},Q_{2}^{1|2}$ \\
\hline
$F_4$ & 1 & $Q_1^F,Q_{2}^{1|2}$& $F_8$ & 12 &$Q_{1}^{2|1},Q_{2}^{F}$\\[1ex]  
\hline
\end{tabular}
\label{table:flexible queues_p=0.6}
\end{table}

\noindent $\diamond$ $\text{F}_1$ and $\text{F}_3$: These cases are similar to $\text{F}_1$ and $\text{F}_3$ in Table~\ref{table:flexible queues}, respectively.

\noindent $\diamond$ $\text{F}_2$: According to Table~\ref{table:flexible queues}, there are two choices for destination queues in $\text{F}_2$ as $Q_1^{F},Q_2^{2|1}$ and $Q_1^{2|1},Q_2^{F}$, and we choose $Q_1^{2|1},Q_2^{F}$ since they do not increase the number of bits in $Q_i^{i|\bar{i}}$.

\noindent $\diamond$ $\text{F}_4$: Similar to $\text{F}_2$, by changing the users' labels, we can show $a \rightarrow Q_1^{F}$ and $b \rightarrow Q_2^{1|2}$.

\noindent $\diamond$ $\text{F}_5$: Table~\ref{table:flexible queues} shows that $Q_1^{2|1},Q_2^{F}$ and $Q_1^{F},Q_2^{2|1}$ can be considered as destination queues for this flexible case. Here, we pick $Q_1^{2|1},Q_2^{F}$ because they include $Q_i^{\bar{i}|i}$.

\noindent $\diamond$ $\text{F}_6$: By changing the users' labels in $\text{F}_5$, we find that  $a \rightarrow Q_1^{F}$ and $b \rightarrow Q_2^{1|2}$.

\noindent $\diamond$ $\text{F}_7$: In this case, ${\sf Rx}_1$ and ${\sf Rx}_2$ get $a\oplus b$ and $a$, respectively. We can show that $Q_1^{F},Q_2^{1|2}$ and $Q_1^{1|2},Q_2^{F}$ are two options for destination queues, and $Q_1^{F},Q_2^{1|2}$ do not increase the bits in $Q_i^{i|\bar{i}}$.

\noindent $\diamond$ $\text{F}_8$: 
Similar to $\text{F}_7$, by changing the users' labels, we can show that $a \rightarrow Q_1^{2|1}$ and $b \rightarrow Q_2^{F}$.
\section{Transmission protocol of homogeneous setting when $0.5\leq \delta \leq 1$}
\label{apndx:B}
In this appendix, we focus on the transmission protocol of homogeneous setting when $0.5\leq \delta \leq 1$. In this interval, the cornerpoints are close to each other and become a single cornerpoint when $\frac{\sqrt{5}-1}{2}<\delta \leq 1$. Moreover, our simulation shows that number of bits in $Q_i^{i|\bar{i}}$ is greater than number of bits in $Q_i^{\bar{i}|i}$ when $0.5\leq \delta \leq 1$. As a result, we use transmission protocol in Appendix~\ref{apndx:A} and define destination queues under $P[t]=C$ as the destination queues for all values of $P[t]$ since cornerpoints are close to each other.
\section{Proof of Lemma~\ref{claim:transparent_CT}}
\label{apndx:transparent_CT}
\begin{proof}
We prove this lemma by induction. 

First, we show that the lemma holds at $t=1$. Each node is aware of the number of bits in $Q_i^{i|\emptyset}$, and other queues are empty at the beginning of $t=1$. Thus, the origin queues are known and given the channel knowledge the destination queues are determined;

Second, we show that if the lemma holds for time instant $t$, then it also holds for time instant $t+1$. Suppose, a network node knows the number of available bits in each queue at the end of time instant $t$. At the beginning of time instant $t+1$, the node will know the number of available bits in initial queues from both transmitters (either by tracking the last time instant or through the update). Then,  the node determines the highest priority non-empty queues as the origin queues. At the end of time instant $t+1$, the node uses the origin queues, the CSI, and control table to determine the destination queues. This completes the proof.
\end{proof}
\end{appendices}
\bibliographystyle{IEEEtran}
\bibliography{IEEEabrv,Ref}
\end{document}